\documentclass[11pt]{article}

\usepackage{references}
\usepackage{cite,graphicx,verbatim}
\usepackage{fullpage}

\newtheorem{theorem}{Theorem}[section]
\newtheorem{lemma}{Lemma}[section]
\newtheorem{corollary}{Corollary}[section]
\newtheorem{conjecture}{Conjecture}[section]

\newcommand{\classNP}{{\sf NP}}
\newcommand{\mod}{\mbox{ mod }}

\newcommand{\sq}{\hbox{\rlap{$\sqcap$}$\sqcup$}}
\newcommand{\qed}{\hspace*{\fill}\sq}
\newenvironment{proof}{\noindent {\bf Proof.}\ }{\par\vskip 4mm\par}

\setcounter{secnumdepth}{2}

\begin{document}


\title{Network Creation Games with Disconnected Equilibria\thanks{A
    preliminary version of this will appear in the Proceedings of
    the 4th Workshop on Internet \& Network Economics, Shanghai,
    China, 2008.}}

\author{Ulrik Brandes\thanks{Department of Computer Science, University of
    Konstanz, Germany, {\tt ulrik.brandes@uni-konstanz.de}} \and Martin
  Hoefer\thanks{Supported by DFG-Graduiertenkolleg
    ``AlgoSyn''. Lehrstuhl Informatik I, RWTH Aachen University, Germany, {\tt
      mhoefer@cs.rwth-aachen.de}} \and Bobo Nick\thanks{Department of
    Computer Science, University of Konstanz, Germany, {\tt
      bobo.nick@uni-konstanz.de}}}

\date{}

\maketitle 


\begin{abstract}
  In this paper we extend a popular non-cooperative network creation
  game (NCG)~\cite{Fabri03} to allow for disconnected equilibrium
  networks. There are $n$ players, each is a vertex in a graph, and a
  strategy is a subset of players to build edges to. For each edge a
  player must pay a cost $\alpha$, and the individual cost for a
  player represents a trade-off between edge costs and shortest path
  lengths to all other players. We extend the model to a
  \emph{penalized game} (PCG), for which we reduce the penalty counted
  towards the individual cost for a pair of disconnected players to a
  finite value $\beta$. Our analysis concentrates on existence,
  structure, and cost of disconnected Nash and strong
  equilibria. Although the PCG is not a potential game, pure Nash
  equilibria always and pure strong equilibria very often exist. We
  provide tight conditions under which disconnected Nash (strong)
  equilibria can evolve. Components of these equilibria must be Nash
  (strong) equilibria of a smaller NCG. However, in contrast to the
  NCG, for almost all parameter values no tree is a stable
  component. Finally, we present a detailed characterization of the
  price of anarchy that reveals cases in which the price of anarchy is
  $\Theta(n)$ and thus several orders of magnitude larger than in the
  NCG. Perhaps surprisingly, the strong price of anarchy increases to
  at most 4. This indicates that global communication and coordination
  can be extremely valuable to overcome socially inferior topologies
  in distributed selfish network design.
\end{abstract}


\section{Introduction}

Networks are ubiquitous in modern society. It is therefore not
surprising that the study of network creation has attracted much
research interest from various disciplines. In recent years, it has
been understood that the distributed formation of networks may be
subject to economic considerations. In particular, the creation of
social, economic, and computational networks was formulated as a game
with selfish agents. A general framework for such an approach was
proposed by Jackson and Wolinsky~\cite{Jackson96}. In their games
there are $n$ players and each player is a vertex in a graph. A
strategy consists of choosing which incident edges to build. Depending
on the network structure there is a payoff for each player, and
players adjust their strategy to maximize their payoff. A general
finding was that there are games, in which no efficient network is
stable for a concept of pairwise stability, which requires bilateral
consent to construct a connection. The extensions and adjustments to
this model are numerous~\cite{Jackson04}. In particular, several works
extended the model to unilateral link creation and the Nash
equilibrium as stability concept~\cite{Bala00,Dutta00}.

A particularly interesting variant was proposed in the context of
distributed systems and the Internet by Fabrikant et
al.~\cite{Fabri03}. In their network creation game (NCG) the cost of
creating an edge is fixed to a parameter $\alpha$. Edge creation is
unilateral, and the cost for a player is a trade-off between the costs
for created edges and the structural position in the network measured
by shortest path distances to all other players. In~\cite{Fabri03} and
consecutive work~\cite{Albers06,Demaine07} the inefficiency of Nash
equilibria was quantified using the~\emph{price of anarchy}, which
captures the deterioration in cost of the worst Nash equilibrium
against a social optimal state. The presently known results on the
price of anarchy are summarized in Figure~\ref{fig:NCG}.
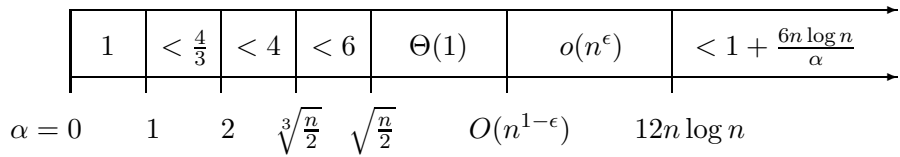
\begin{figure}[h]
\begin{center}
\setlength{\unitlength}{1mm}
\setlength{\linewidth}{20mm}
\begin{picture}(122,19)
\put(2,2){\normalsize $\alpha = 0$}
\put(10,10){\vector(1,0){110}}
\put(10,19){\vector(1,0){110}}

\put(10,8){\line(0,1){11}}
\put(14,13){\normalsize 1}
\put(20,8){\line(0,1){11}}
\put(20,2){\normalsize 1}
\put(22,13){\normalsize $< \frac{4}{3}$}
\put(30,8){\line(0,1){11}}
\put(30,2){\normalsize 2}
\put(32,13){\normalsize $< 4$}
\put(40,8){\line(0,1){11}}
\put(37,2){\normalsize $\sqrt[3]{\frac{n}{2}}$}
\put(42,13){\normalsize $< 6$}
\put(50,8){\line(0,1){11}}
\put(47,2){\normalsize $\sqrt{\frac{n}{2}}$}
\put(55,13){\normalsize $\Theta(1)$}
\put(68,8){\line(0,1){11}}
\put(63,2){\normalsize $O(n^{1-\epsilon})$}
\put(75,13){\normalsize $o(n^\epsilon)$}
\put(90,8){\line(0,1){11}}
\put(85,2){\normalsize $12n \log n$}
\put(93,13){\normalsize $< 1 + \frac{6n \log n}{\alpha}$}
\end{picture}
\end{center}
\caption{\label{fig:NCG} Price of anarchy in the NCG}
\end{figure}
Other equilibrium concepts were also studied, e.g.\ existence and cost
of pairwise stable equilibria~\cite{Corbo05}, or of strong
equilibria~\cite{Andelman07}. Extensions to more general edge costs or
different player cost trade-offs proved useful in the analysis of
mobile peer-to-peer
networks~\cite{Eidenbenz03,Moscibroda06,Ackermann07}.

In network analysis~\cite{Brandes05}, the inverse of the sum of
shortest path lengths is one of the most commonly used measures of
centrality known as \emph{closeness}~\cite{Freeman79}. A problem with
closeness is that global connectivity is required for the scores to be
comparable. This means that in the NCG for moderate to high edge
costs the trade-off is distorted by the enforcement of
connectivity. Thus, it was not surprising that trees proved to be a
prominent equilibrium structure~\cite{Fabri03}.

In this paper, we remedy this problem by replacing the infinite cost
of not being connected by a finite penalty $\beta$. This corresponds
directly to a variant of closeness centrality proposed by Botafogo et
al.~\cite{Botafogo94}. This is also closely related to a measure
termed \emph{radiality}~\cite{Valente98}, although here $\beta$
depends on the network structure. Such an adjustment to the NCG was
also suggested as an open problem in~\cite{Fabri03}. Our
\emph{penalized network creation game} (PCG) is introduced in
Section~\ref{sect:model}. $\beta$ allows to level off the infinite
penalities for disconnectivity and to study the effect of the
connection requirement in the NCG on topology and social cost of Nash
equilibria. Since the cost of connected equilibria is the same as in
the NCG, we will be most interested in existence, structure, and cost
of disconnected Nash equilibria. Naturally, if $\beta$ is high, then
Nash equilibria of the PCG reveal the same properties as those of the
NCG. It is thus not surprising that a number of insights for the NCG
can be translated directly to the PCG. If $\beta$ decreases, then
properties of Nash equilibria can change. In particular, an
interesting insight gained from our structural analysis in
Section~\ref{sect:disNE} is presented in Theorem~\ref{theo:noTree}. It
shows that the prevalent tree structures of the NCG are absent in
disconnected Nash equilibria whenever $\alpha > 1$ or $\beta > 2$.

Our analysis on the existence of disconnected networks offers relevant
insight for the analysis of distributed networks with rational agents.
In many scenarios, a priori, a given set of selfish entities has no
intrinsic motivation to create a globally connected network. In
contrast, our findings indicate a peculiar absence of non-empty
disconnected stable networks, which indicate underlying incentives
that prohibit their emergence. We failed to identify any non-empty
disconnected Nash equilibrium for $\beta > 3$. In addition, structural
conditions like constant diameter in all known equilibrium topologies
for the NCG led us to conjecture that there is a constant $\beta'$
such that the empty network is the only disconnected Nash equilibrium
for any PCG with $\beta > \beta'$. This appears somewhat suprising,
because the agents in the PCG are not explicitly forced into
connection. In addition, it reveals that in terms of topology of Nash
equilibria the assumption of infinite penalties in the NCG is not a
significant drawback.

In addition, we consider the price of anarchy in
Section~\ref{sect:PoA}. There are parameter values, for which
disconnected Nash equilibria appear but the social optimum is
connected, which could lead to an unbounded price of anarchy. However,
we show that the price of anarchy in the PCG is always bounded by
$O(n)$. In addition, Theorem~\ref{theo:PoA} reveals cases with
\emph{tightness} and a matching lower bound of $\Omega(n)$. This bound
is strictly larger than any of the known bounds for the NCG. In
Section~\ref{sect:SPoA} we contrast these findings with the scenario,
in which players can play joint coordinated deviations and consider
strong equilibria. Unless $\alpha$ and $\beta$ are within a small
range, the social optimum is also a strong equilibrium (see
Theorem~\ref{theo:SE}). In Theorem~\ref{theo:SPoA} we prove that the
price of anarchy for strong equilibria is at most 4. This reveals that
in the PCG Nash equilibria can be several orders of magnitude more
costly than strong equilibria, a question which is still unsolved for
the NCG. More generally, it shows that joint and coordinated actions
of selfish agents can drastically reduce inefficiencies in selfish
network creation. Finally, Section~\ref{sect:conclude} concludes and
presents some problems for further research.
\section{The Model and Initial Results}
\label{sect:model}
The network connection game (NCG) is a tuple $(V,\alpha)$ and can be
described as follows. The set of players $V$ is the set of vertices of
a graph. Possible edges $\{i,j\} \in V \times V$ have cost $\alpha$. A
strategy $s_i$ of a player $i$ is a subset $s_i \subset
V\backslash\{v\}$ and indicates, which edges player $i$ chooses to
build. In this way a strategy vector $s$ induces a set of edges
between the players. Given a strategy vector $s$ the individual cost
for a player $i$ is
\[
c_i(s) = \alpha|s_i| + \sum_{j \neq i} dist_s(i,j),
\]
where $\alpha > 0$ and $dist_s(i,j)$ is the length of a shortest-path
in the undirected graph $G_s = (V, E_s)$ induced by the strategy
vector $s$. Note that $G_s$ is assumed to be undirected, i.e., each
edge can be traversed in any direction, independent of which player
pays for it. In the regular connection game $dist_s(i,j) = \infty$ if
players $i$ and $j$ are in different components of $G_s$. In the
\emph{penalized network creation game} (PCG) we are given a penalty
value $\beta > 1$, and $dist_s(i,j) = \beta$ for players $i$ and $j$
in different components. A pure \emph{Nash equilibrium (NE)} is a
state $s$, in which no player can unilaterally decrease his cost $c_i$
by changing his strategy $s_i$. We will restrict our attention to pure
equilibria throughout the paper. The \emph{social cost} $c(s)$ of a
state $s$ is simply $c(s) = \sum_{i \in V} c_i(s)$. A social optimum
state $s^*$ is a state with minimum social cost. Note that for the
cost of a state it does not matter, who builds an edge, and hence we
will sometimes consider the graph $G_s$ instead of $s$.
States that play an important role in the analysis of the PCG are the
empty state $s_\emptyset = (\emptyset,\ldots,\emptyset)$, $s_K$
corresponding to the complete graph, in which each edge $\{i,j\}$ with
$i\neq j$ is paid by player $\min\{i,j\}$, and $s_Z$ corresponding to
a center-sponsored star, in which one player purchases edges to all
other players.

Fabrikant et al.~\cite{Fabri03} show that there is always a pure NE in
the NCG and mention that it might be found by iterative improvement
steps. Finding a best-response for a player in a NCG, however, was
shown \classNP-hard~\cite{Fabri03}, and this translates to the PCG for
sufficiently large penalty cost. In addition, we show that
better-response dynamics may cycle, hence the game is no potential
game~\cite{Monderer96}. As the dynamics involve no disconnectivities,
the result follows directly for the PCG. Nevertheless, in the PCG
there is always a pure NE. This serves as a first insight to motivate
the further study of the properties of pure NE in the PCG.
\begin{theorem}
\label{theo:NashExist}
Every PCG has a pure Nash equilibrium, but neither NCG nor PCG are
potential games.
\end{theorem}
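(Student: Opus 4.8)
The plan is to treat the two assertions separately, and for existence to bypass dynamics altogether: guess an equilibrium and verify it, choosing among the three distinguished states introduced above --- the empty state $s_\emptyset$, the complete graph $s_K$, and the center-sponsored star $s_Z$ --- according to how $\alpha$ compares with $1$ and with $\beta-1$. A short best-response computation in each state gives the following. The state $s_K$ is a Nash equilibrium exactly when $\alpha\le\min\{1,\beta-1\}$: the only available deviations delete owned edges, deleting $k$ of them while staying connected changes the deviator's cost by $k(1-\alpha)$, and the single player who can disconnect herself changes her cost by $(n-1)(\beta-1-\alpha)$ (here $n\ge 3$). The state $s_\emptyset$ is a Nash equilibrium exactly when $\alpha\ge\beta-1$: from the empty graph every deviation of a player is to buy a star to some $k$ others, changing her cost by $k(\alpha+1-\beta)$. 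And for $n\ge 3$ the star $s_Z$ is a Nash equilibrium exactly when $1\le\alpha\le\beta-1$: the center can only drop edges, each trading $\alpha$ against $\beta-1$, while a leaf can only buy shortcuts, each trading $\alpha$ against $1$.

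These three parameter regions cover all admissible pairs $(\alpha,\beta)\in(0,\infty)\times(1,\infty)$. If $\beta\ge 2$ they are $(0,1]$, $[1,\beta-1]$ and $[\beta-1,\infty)$, whose union is all positive $\alpha$; if $1<\beta<2$ then $\min\{1,\beta-1\}=\beta-1$, so already the regions for $s_K$ and $s_\emptyset$, namely $(0,\beta-1]$ and $[\beta-1,\infty)$, cover everything (and the middle region is empty and unneeded). The boundary case $n=2$ is subsumed, with $s_K=s_Z$ a single edge. Hence every PCG has a pure Nash equilibrium; the only real content is the handful of elementary deviation calculations, after which the case split is immediate.

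For the second assertion, the approach is to exhibit one fixed NCG instance $(V,\alpha)$ together with an explicit finite sequence of strategy profiles $s^{(0)}\to s^{(1)}\to\cdots\to s^{(k)}=s^{(0)}$ in which consecutive profiles differ in a single player's strategy and each step strictly lowers that player's individual cost. Such a better-response cycle is incompatible with the existence of any potential in the sense of~\cite{Monderer96}: already a generalized ordinal potential $\Phi$ must strictly decrease whenever a deviating player strictly improves, so $\Phi$ would strictly decrease at each of the $k$ steps and yet satisfy $\Phi(s^{(k)})=\Phi(s^{(0)})$, a contradiction (exact and ordinal potentials being special cases, they are ruled out a fortiori). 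One then observes that every graph occurring along the cycle is connected, so $dist_s(i,j)$, and hence every $c_i$, take identical values in the PCG and in the NCG on all these profiles; therefore the very same sequence of single-player improvements is a better-response cycle in the corresponding PCG, and the PCG is not a potential game either.

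The main obstacle is constructing the cycle: one needs a small instance and a value of $\alpha$ for which a ``rotating'' pattern of deviations --- typically, one player buys a shortcut to a far vertex, which renders another player's edge redundant so she drops it, which makes a further shortcut profitable, and so on --- chases itself around and never settles at an equilibrium. Once the instance is pinned down, certifying the cycle is a finite tabulation of the values $c_i$ before and after each move; and it is worth noting that whether the current mover happens to have an even better (in particular, disconnecting) alternative at some intermediate profile is irrelevant, since we only need each listed move to be improving, not optimal.
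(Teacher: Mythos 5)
Your existence argument is correct and is essentially the paper's proof: the same three candidate states $s_\emptyset$, $s_Z$, $s_K$, the same case split of $\alpha$ against $1$ and $\beta-1$, and the same one-line deviation computations (connecting to $t$ isolated players costs $t(\alpha+1-\beta)$, a leaf shortcut costs $\alpha-1$, the center/player~1 trading $\alpha$ against $\beta-1$ per edge when disconnecting). Your extra remarks on coverage of the parameter range and on $n=2$ are fine, and the observation that only one player can disconnect $s_K$ matches the paper's assignment of edge ownership.

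The second half, however, has a genuine gap: the claim that neither the NCG nor the PCG is a potential game needs an explicit counterexample, and your proposal never produces one. You correctly set up the reduction — a better-response cycle (or, equivalently in a finite game, an infinite improvement path) rules out even a generalized ordinal potential, and if all profiles along it are connected the same cycle works verbatim in the PCG, exactly as the paper remarks — but you then concede that ``the main obstacle is constructing the cycle'' and leave it as a plan. That construction is the actual content of this half of the theorem. The paper supplies it concretely: for any $\alpha>3$ choose an integer $k$ with $k<\alpha<\tfrac{3k}{2}$, take a specific configuration on $n=4k$ players (Figure~\ref{fig:potential}), and verify a three-step sequence of strict improvements (player 4 removes $e_1$, player 2 removes $e_2$, player 4 rebuilds $e_1$ and $e_2$) whose endpoint is isomorphic to the start, yielding an infinite improvement path; since no step disconnects the graph, the argument transfers to the PCG. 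Without exhibiting such an instance and tabulating the cost changes, your proof of the non-potential statement is incomplete; the asymmetry between needing a parametrized family ($n=4k$, $k<\alpha<\tfrac{3k}{2}$) rather than a single small fixed instance also matters if one wants the statement for the games as defined for general $\alpha$.
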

%
%
\begin{figure}
  \begin{center}
    \includegraphics[scale=0.4]{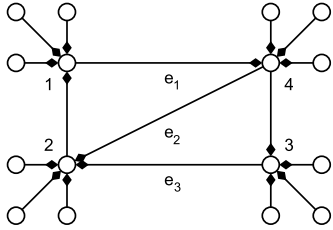}
    \caption{\label{fig:potential} A game with $k = 4$ and $4 < \alpha
      < 6$ with cycling better response iteration. Black dots indicate
      the player who pays for the edge.}
  \end{center}
\end{figure}
%
%
\begin{proof}
  We first prove the non-existence of a potential function by
  contradiction. For any $\alpha > 3$ choose an integer $k$ with $k <
  \alpha < \frac{3k}{2}$. Now construct a strategy combination for $n
  = 4k$ players as depicted in Figure~\ref{fig:potential}. The
  following steps each represent a strict improvement for the players:
  (1) player 4 removes edge $e_1$, (2) player 2 removes edge $e_2$,
  (3) player 4 builds edges $e_1$ and $e_2$. The resulting state is
  isomorphic to the initial state, in which the roles of players 2 and
  4 and edges $e_1$ and $e_3$ are switched. In particular, this allows
  us to construct an infinite improvement path, which contradicts the
  existence of a potential function.

  For the proof of existence let $\alpha \ge \beta-1$ and consider
  $s_\emptyset$. For every player a strategy change consists of
  connecting to a number $t$ of other players. As $t\alpha + t -
  t\beta = t(\alpha - (\beta-1)) \ge 0$, $s_\emptyset$ represents a
  NE. For $1 \le \alpha < \beta-1$ consider the state $s_Z$
  corresponding to a center-sponsored star, in which the center player
  pays for all edges. Using the same argument as for $s_\emptyset$ it
  is not profitable for the center player to remove any edges. For a
  leaf player connecting to additional $t$ other leaf players yields a
  difference of $t\alpha - t \ge 0$. Hence, the center-sponsored star
  represents a NE. Finally, for $\alpha < 1$ and $\alpha < \beta-1$
  consider the state $s_K$. Then, as $\alpha < 1$ every edge removal
  that leaves the graph connected cannot be profitable. The only
  possibility to disconnect the graph, however, is for player 1 to
  remove all edges. This changes his cost by $\beta(n-1) -
  (\alpha+1)(n-1) > 0$. Hence, for this case $s_K$ represents a
  NE. \qed
\end{proof}

\section{Disconnected Equilibria}
\label{sect:disNE}
In this section we consider existence and structural properties of
disconnected NE in the PCG. First, we clarify the existence of
disconnected equilibria.

\begin{theorem}
  \label{theo:DisNERange}
  For $\alpha \ge \beta-1$ the empty graph is always a disconnected
  NE.  For $0 < \alpha < \beta-1$ there is no disconnected NE.
\end{theorem}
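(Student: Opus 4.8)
The plan is to treat the two regimes separately; the case $\alpha \ge \beta-1$ is essentially a recycling of Theorem~\ref{theo:NashExist}, while the case $0 < \alpha < \beta-1$ carries the real content. For $\alpha \ge \beta-1$ I would revisit the empty state $s_\emptyset$: there every player pays $(n-1)\beta$, and the only deviation open to a player $i$ is to buy edges to some set of $t$ further players, each of which is isolated and therefore moves to distance exactly $1$ while no other distance changes. The cost difference is $t\alpha + t - t\beta = t(\alpha-(\beta-1)) \ge 0$, so $s_\emptyset$ is a Nash equilibrium, and for $n \ge 2$ it is disconnected.

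For $0 < \alpha < \beta-1$ I would argue by contradiction. Suppose $s$ is a disconnected Nash equilibrium, so $G_s$ has at least two components; pick any two of them, $C_1$ and $C_2$, write $m_2 = |C_2| \ge 1$, and fix a player $i \in C_1$. Consider the deviation $s_i' = s_i \cup C_2$, in which $i$ additionally builds an edge to every vertex of $C_2$ (these $m_2$ edges are all new, since $C_1$ and $C_2$ are distinct components, so $|s_i'| = |s_i| + m_2$). Because $i$ remains the only vertex joining $C_1$ to $C_2$, this deviation changes nothing about $i$'s distances to vertices of $C_1$ or to vertices of the remaining components --- the latter staying at $\beta$ --- while every vertex of $C_2$ moves from distance $\beta$ to distance $1$ as seen from $i$. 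Hence
\[
c_i(s') - c_i(s) \;=\; m_2\alpha + \bigl(m_2 - m_2\beta\bigr) \;=\; -\,m_2\bigl((\beta-1)-\alpha\bigr) \;<\; 0 ,
\]
so $i$ strictly improves, contradicting the equilibrium property; thus no disconnected Nash equilibrium can exist in this regime.

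The argument is short, and the only step that needs care is the distance bookkeeping: one must verify that attaching $i$ to an entire foreign component cannot inadvertently increase any of $i$'s distances (adding edges never lengthens a shortest path, and no new shortcut inside $C_1$ can appear through the single bridge vertex $i$). This is precisely why I connect $i$ to all of $C_2$ rather than to a single, cleverly chosen vertex $j \in C_2$: if $C_2$ were long and thin, some of its vertices could end up at distance larger than $\beta$ from $i$ after adding only the edge $\{i,j\}$, so a lone-edge deviation need not be profitable. Connecting to every vertex of $C_2$ keeps all the affected distances equal to $1$ and makes the comparison with the finite penalty $\beta$ completely transparent. I would also remark that the two cases meet exactly at $\alpha = \beta-1$, which matches the boundary behaviour of $s_\emptyset$ and shows the stated dichotomy is sharp.
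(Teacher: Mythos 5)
Your proof is correct and follows essentially the same route as the paper: the first regime is handled by the same $t(\alpha-(\beta-1))\ge 0$ computation for $s_\emptyset$, and the second by having a player buy edges to every vertex he is disconnected from, yielding a strictly negative cost change when $\alpha<\beta-1$. The only cosmetic difference is that the paper's deviating player connects to \emph{all} players in all other components rather than to a single foreign component $C_2$, which does not change the argument.
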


\begin{proof}
  The first part follows from Theorem~\ref{theo:NashExist}. For the
  second part consider a player $v$ in a disconnected NE $s$. Let
  $n_v$ be the size of the component of the graph $G_s$, in which $v$
  is located. Now suppose $v$ changes his strategy by connecting to
  all $n-n_v$ players in other components. Then the change is
  $\alpha(n-n_v) + (n-n_v) - \beta(n-n_v) = (n-n_v)(\alpha - (\beta -
  1)) < 0$. Hence, under these conditions \emph{every} player in a
  disconnected state can decrease his individual cost. \qed
\end{proof}

The theorem provides a tight characterization using the empty
graph. An interesting issue, however, is to explore whether non-empty
disconnected NE are possible, because in many cases the empty graph
represents a rather unrealistic prediction for a stable network. Note
that a component of $k$ players in a disconnected NE of a PCG with
given $\alpha$ and $\beta$ must be a NE in the corresponding NCG with
$\alpha$ and $k$ players. There are a number of structures that have
been identified as components of NE in the NCG, in particular, graphs
based on affine planes (including the Petersen graph), cliques,
cliques of star graphs, and trees~\cite{Albers06}. In the following we
consider each of these classes and assume a size of at least 2
vertices to exclude the degenerate case of singleton vertices. The
treatment of affine plane graphs is cumbersome but rather
straightforward, so we omit details here. These graphs can represent
NE for $\beta < 3$ and certain restricted values of $\alpha$. More
information is available from the authors upon request.

\subsection{Cliques and Cliques of Stars} 

We refer to a pair as a component consisting of two players linked by a
single edge.

\begin{lemma}
  For $\alpha > 1$ there is no pair in a disconnected NE.
\end{lemma}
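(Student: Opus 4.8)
The plan is to derive a contradiction from two unilateral deviations by two different players, pinning $\alpha$ and $\beta$ down to a regime incompatible with $\alpha>1$. Assume $s$ is a disconnected NE in which some component is a pair $\{u,v\}$, and say $u$ is the player who pays for the single edge $\{u,v\}$. Since $s$ is disconnected and contains a pair, $n\ge 3$, so there is a player $w$ lying in some other component $C$. The current cost of $u$ is $c_u(s)=\alpha+1+(n-2)\beta$, because $v$ is $u$'s only neighbour and each of the remaining $n-2$ players sits in a different component from $u$ and hence contributes the penalty $\beta$.

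The first deviation is for $u$ to delete the edge $\{u,v\}$, which isolates $u$ and yields cost $(n-1)\beta$. Stability of $s$ requires $(n-1)\beta \le \alpha + 1 + (n-2)\beta$, i.e.\ $\alpha \le \beta - 1$. Theorem~\ref{theo:DisNERange} tells us a disconnected NE can only exist when $\alpha \ge \beta - 1$, so in fact $\alpha = \beta - 1$; combined with $\alpha > 1$ this already forces $\beta = \alpha + 1 > 2$.

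The second deviation is for $w$ to \emph{additionally} build an edge to $u$. Inserting $\{w,u\}$ merges $C$ with $\{u,v\}$ but changes nothing else for $w$: its distance to $u$ drops from $\beta$ to $1$, its distance to $v$ drops from $\beta$ to $2$, its distances within $C$ are unchanged, and every other player stays in a different component from $w$ at penalty $\beta$. Hence $w$'s cost changes by $\alpha - (\beta-1) - (\beta-2)$; substituting $\alpha = \beta-1$ gives $2-\beta<0$ (using $\beta>2$), so $w$ strictly improves, contradicting that $s$ is a NE.

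The argument is short once one hits on the right two deviations; the only place that needs care is the bookkeeping for the second deviation — one must verify that adding $\{w,u\}$ leaves every distance of $w$ untouched except those to $u$ and $v$ (in particular, distances within $C$ and to the other components are unaffected), so that the cost difference is exactly $\alpha-(\beta-1)-(\beta-2)$. I would also note that one can bypass Theorem~\ref{theo:DisNERange} entirely: the first deviation gives $\alpha \le \beta-1$ and the second, before substituting, gives $\alpha \ge 2\beta - 3$, whence $\beta \le 2$ and therefore $\alpha \le \beta - 1 \le 1$, again contradicting $\alpha > 1$.
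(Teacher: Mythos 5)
Your proof is correct and takes essentially the same route as the paper's: the payer of the pair edge must not profit from deleting it (giving $\alpha \le \beta-1$, hence $\alpha=\beta-1$ via Theorem~\ref{theo:DisNERange}), and an outside player connecting to a pair member changes his cost by $\alpha-2\beta+3<0$, a contradiction; your closing remark that one can skip the theorem by combining $\alpha\le\beta-1$ with $\alpha\ge 2\beta-3$ is a valid, slightly more self-contained variant. One small slip: your stability inequality for the first deviation should read $(n-1)\beta \ge \alpha+1+(n-2)\beta$ (deviation not profitable), which is what your stated conclusion $\alpha\le\beta-1$ actually follows from.
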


\begin{proof}
  Assume there is a pair in a disconnected NE. As one of the players
  wants to keep the edge, it must be that $\beta - \alpha - 1 \ge 0$,
  and with Theorem~\ref{theo:DisNERange} $\alpha = \beta-1$. Now
  consider a different player $v$ that constructs an edge to a player
  from the pair. The change of the individual cost is $\alpha - 2\beta
  + 3 = -\alpha + 1 < 0$. Hence, the change is profitable, which
  proves that a pair cannot appear in a disconnected NE. \qed
\end{proof}

Similarly, for larger clique components it must be $\alpha \le 1$,
because otherwise a player from the component removes one edge and
accepts the increase in distance. Theorem~\ref{theo:DisNERange} yields
the following direct corollary.

\begin{corollary}
  For $\beta > 2$ there is no clique component in a disconnected NE.
\end{corollary}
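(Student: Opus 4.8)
The plan is to split into the two parameter regimes $\alpha \le 1$ and $\alpha > 1$ and rule out a clique component in each, so that the corollary drops out of Theorem~\ref{theo:DisNERange} together with the two observations immediately preceding it.

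First I would dispatch the regime $\alpha \le 1$. Since $\beta > 2$ we have $\alpha \le 1 < \beta - 1$, so the parameters satisfy $0 < \alpha < \beta - 1$, which is exactly the range covered by the second part of Theorem~\ref{theo:DisNERange}: there is no disconnected NE at all, hence certainly none containing a clique component. This case requires nothing further. In the remaining regime $\alpha > 1$ I would argue by the size of the clique. A clique component on exactly two vertices is a \emph{pair}, and the preceding Lemma already shows no pair occurs in a disconnected NE when $\alpha > 1$. For a clique component $C$ on $k \ge 3$ vertices I would reuse the edge-removal argument sketched just before the corollary: some player $v \in C$ sponsors at least one of the edges of $C$, and if $v$ deletes that edge then, because $k \ge 3$, the former endpoint is still reachable from $v$ at distance $2$ via any third vertex of $C$, while all other distances are unchanged. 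The change in $c_v$ is therefore $-\alpha + 1 < 0$, contradicting the NE condition. So no clique component of any size can appear for $\alpha > 1$, and combining the two regimes proves the statement.

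The only real care needed — the "main obstacle", such as it is — is the boundary bookkeeping. A two-vertex clique genuinely behaves differently from a larger one, since deleting its single edge disconnects the component and incurs cost $\beta$ rather than $1$, so that case must be routed through the pair Lemma rather than the edge-removal argument; and one should confirm that in a clique on $k \ge 3$ vertices some player does pay for an edge, so that the profitable deviation is actually available. Beyond these points, everything is immediate from Theorem~\ref{theo:DisNERange} and the arithmetic already carried out in the surrounding text.
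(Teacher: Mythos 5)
Your proof is correct and follows essentially the same route as the paper: the pair lemma for two-vertex cliques, the edge-removal argument showing $k \ge 3$ cliques force $\alpha \le 1$, and Theorem~\ref{theo:DisNERange} to exclude disconnected NE when $\alpha \le 1 < \beta - 1$. The only difference is cosmetic (you organize it as a case split on $\alpha$ rather than deriving $\alpha \le 1$ first), and your boundary bookkeeping for the pair case matches the paper's intent.
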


A \emph{$(k,l)$-clique of stars} is a clique with $k$ vertices, in
which each such node is the center of a star of $l$ vertices. We
consider $k \ge 3$, because otherwise the structure is a
tree. In~\cite{Albers06} it was shown that a $(k,l)$-clique of stars,
in which all edges are created by the players in the clique, is a NE
for the NCG with $\alpha = l$. Here we show that the appearance of
such a component in a disconnected NE is quite limited. 

\begin{lemma}
  \label{lem:CliqueOfStars}
  For $\alpha = l$ and $k \ge 3$, a $(k,l)$-clique of stars, in which
  all edges are created by the players in the clique, can be a
  component in a disconnected NE if and only if $\alpha = 1$ and
  $\beta = 2$.
\end{lemma}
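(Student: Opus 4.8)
The plan is to prove both directions of the equivalence. For the ``if'' direction, assume $\alpha = l = 1$ and $\beta = 2$, and verify that a $(k,1)$-clique of stars (all star edges paid by the clique players) is a disconnected NE. Since $\alpha = \beta - 1$, Theorem~\ref{theo:DisNERange} already guarantees that the empty graph is a NE, but here we must check the specific clique-of-stars component against every type of deviation: (i) a clique player dropping one of its incident edges (either the clique edge to another center or the pendant edge to its leaf), (ii) a clique player adding edges, (iii) a leaf player dropping its (paid-for-by-someone-else) status is impossible, but a leaf player \emph{adding} edges to other leaves or centers, and (iv) any player building edges to the other (external) component. Each of these reduces to a short inequality in $\alpha$ and $\beta$; with $\alpha = 1$, $\beta = 2$ they all come out non-profitable. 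The cross-component deviations are controlled exactly as in the proof of Theorem~\ref{theo:DisNERange} and the pair lemma: connecting to $t$ outside players costs $t\alpha + t - t\beta = t(\alpha - \beta + 1) = 0$, so it is not strictly improving. This direction is essentially bookkeeping.

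The substance is in the ``only if'' direction. Suppose a $(k,l)$-clique of stars with $k \ge 3$ is a component of a disconnected NE. From the cited NCG fact we already have $\alpha = l$, so $l$ is a positive integer. First I would pin down $\beta$ from above: the component is disconnected from the rest, so by Theorem~\ref{theo:DisNERange} we need $\alpha \ge \beta - 1$, and in fact I expect to push this to $\alpha = \beta - 1$ by exhibiting a profitable edge \emph{removal} inside the component whenever $\alpha > \beta - 1$ would leave slack --- more precisely, I would look at a clique center $u$ removing the edge to a leaf $w$ (which $u$ pays for): this saves $\alpha$ and increases $u$'s distance to $w$ from $1$ to $\beta$ (since $w$ becomes isolated), a net change of $\beta - 1 - \alpha$; for this to be non-negative in equilibrium we need $\alpha \le \beta - 1$, hence combined with Theorem~\ref{theo:DisNERange}, $\alpha = \beta - 1$.

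Next I would bound $\beta$ (equivalently $\alpha = l$) from above by using the same kind of external-attachment argument that killed pairs and cliques: an outside player $v$ (or a leaf within the component) builds an edge to a clique center $u$. Before, $v$'s distance to all $l+1$ players in $u$'s star and to the $k-1$ other centers and their $l$ leaves each was $\beta$; after paying $\alpha$ for one edge, $v$ is at distance $1$ from $u$, $2$ from $u$'s leaves and the neighboring centers, $3$ from those centers' leaves. Writing out the total change $\alpha + (\text{new distances}) - (\text{old distances} = \beta \cdot (kl + k - 1))$ and demanding it be $\ge 0$ for equilibrium gives an inequality of the form $\alpha \ge \beta \cdot (\text{something growing in } k \text{ and } l) - (\text{something})$, which together with $\alpha = \beta - 1$ and $k \ge 3$ forces $\beta \le 2$ and $l \le 1$, i.e. $l = 1$, $\alpha = 1$, $\beta = 2$. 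The main obstacle will be choosing the single most restrictive deviation: there are several candidate deviators (internal leaf vs. truly external vertex, attaching to a center vs. to a leaf), and I expect that attaching an \emph{external} vertex to a clique \emph{center} yields the cleanest and tightest inequality, because it collapses the largest batch of $\beta$-distances to small finite values at the cost of a single edge. Once the right deviation is isolated, the rest is a routine computation of shortest-path sums in the clique-of-stars, and the two-sided squeeze $\alpha = \beta - 1$ together with the attachment inequality nails down $\alpha = 1$, $\beta = 2$.
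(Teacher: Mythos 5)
Your proposal follows essentially the same route as the paper: you obtain $\alpha = \beta - 1$ by combining Theorem~\ref{theo:DisNERange} with the removal of a clique-paid star edge (which forces $\alpha \le \beta-1$), and then the attachment of an external vertex to a clique center yields, after substituting $\alpha = l = \beta-1$, an inequality that forces $l=1$, $\alpha=1$, $\beta=2$, with the ``if'' direction checked by direct verification just as the paper asserts. The only (harmless) divergence is a bookkeeping convention on how many leaves a star of ``$l$ vertices'' contributes, which shifts the constants in the attachment inequality but leads to the same conclusion.
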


\begin{proof}
  Consider a state $s$, which is a disconnected NE with a component
  $C$ of a $(k,l)$-clique of stars. In addition to the deviations
  considered in the NCG, for a component in a disconnected NE we must
  consider a split of the component and a connection of a vertex
  outside of the component. For $k \ge 3$ we can assign edge costs to
  the players such that no clique player is able to unilaterally
  disconnect the clique. As all edges of the stars are also built by
  players from the clique, we must have $\alpha + 1 - \beta \le 0$,
  and hence with the general bound from Theorem~\ref{theo:DisNERange}
  $\alpha = \beta - 1$.

  Consider a connection from a player outside the component to an
  arbitrary player $v$ of the clique. Hence,

  \[ \alpha - kl\beta + \sum_{w \in C} (dist_s(v,w) + 1) = \alpha
  -kl\beta + (2kl - k - l) + kl \ge 0. \]

  It is $\alpha = l = \beta - 1$, so we get as condition

  \[ 0 \le l - kl(l+1) + 3kl - k - l = -k(l^2 - 2l + 1) = -k(l-1)^2.\]

  Thus, $l = \alpha = 1$ and $\beta = \alpha + 1 = 2$. It is
  straightforward to verify that under these conditions the outlined
  state is a NE. \qed
\end{proof}

\subsection{Trees}
Tree graphs are a structure whose appearance is wide-spread in the
NCG~\cite{Fabri03,Albers06}. The following analysis shows that this
property is only due to the requirement that a NE must be
connected. In the PCG these structures can appear only in very special
cases.

\begin{lemma}
  \label{lem:oneOther}
  For $\beta > 2$ every non-singleton player $v$ in a disconnected NE
  has at least one incident edge that was created by a different
  player $w \neq v$.
\end{lemma}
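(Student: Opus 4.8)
The plan is a proof by contradiction. Suppose some non-singleton player $v$ in a disconnected NE $s$ created all $d \ge 1$ of its incident edges; let $C$ be its component, $n_v = |C| \ge 2$, and write $D = \sum_{j \in C \setminus \{v\}} \mathrm{dist}_s(v,j)$. First I would invoke Theorem~\ref{theo:DisNERange}: since $s$ is disconnected we have $\alpha \ge \beta - 1$, and together with $\beta > 2$ this gives $\alpha > 1$. Disconnectedness also guarantees a player $w$ in some other component, which I will use below.

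Next I would collect the inequalities forced by three natural deviations. (i)~$v$ drops all its incident edges and becomes a singleton; because $v$ owns every one of them the edge cost drops by exactly $\alpha d$, so Nash-stability gives $\alpha d + D \ge \beta(n_v - 1)$, and since $v$ has $d$ neighbours at distance $1$ and the remaining $n_v - 1 - d$ vertices of $C$ at distance $\ge 2$ (i.e.\ $D \ge 2(n_v-1) - d$), this yields $(\alpha - 1)d \le (\beta - 2)(n_v - 1)$. (ii)~For a neighbour $u$ of $v$, $v$ drops only the edge $vu$: if this disconnects $C$ into $v$'s side and a piece $C_u$, Nash-stability gives $\sum_{j \in C_u}\mathrm{dist}_s(v,j) \le \beta|C_u| - \alpha$, and since only $u$ is then at distance $1$ from $v$ this forces $|C_u| \ge (\alpha-1)/(\beta-2)$; if instead $C$ stays connected, the resulting total distance increase is at least $\alpha > 1$, which constrains the cycles through $v$. (iii)~The outside player $w$ builds a single edge to an arbitrary vertex $x$ of $C$; Nash-stability gives $\alpha \ge \sum_{y \in C}\max\{0,\beta - 1 - \mathrm{dist}_s(x,y)\}$, and already the $y = x$ and $y \sim x$ terms force $\deg(x) \le (\alpha - \beta + 1)/(\beta - 2)$ for every $x \in C$, so $C$ has bounded maximum degree.

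I would then combine these. When $v$'s within-component cost $\alpha d + D$ is small relative to $\beta(n_v-1)$, deviation (i) is strictly profitable — contradiction. When it is large, $C$ must be ``dense around $v$'' (in the extreme a star centred at $v$, which by (i) forces $\alpha = \beta - 1$), and then letting $w$ attach to $v$ in (iii) is strictly profitable — contradiction. Bound (ii) pins the branches of $v$ down to size at least $(\alpha-1)/(\beta-2)$ and so controls the intermediate regime; the tiny-component sub-cases (pairs, cliques, cliques of stars) are already excluded by Theorem~\ref{theo:DisNERange}, the pair lemma, and Lemma~\ref{lem:CliqueOfStars}. The main obstacle is that at a Nash equilibrium each single deviation is only \emph{weakly} non-profitable for the extremal topologies — a star centred at $v$, or $v$ as a hub of minimum-size shallow branches — so the heart of the argument is to squeeze a \emph{strict} improvement out of some deviation in precisely those near-tie configurations, which is where the hypothesis that $v$ owns all its incident edges (the full refund $\alpha d$ in (i)) and the disconnectedness (availability of $w$ in (iii)) must be used together rather than in isolation.
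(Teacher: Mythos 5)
There is a genuine gap: your three deviation families are not strong enough to force a contradiction, and you say so yourself when you describe the ``near-tie'' intermediate regime as the unresolved heart of the argument. Concretely, the inequalities you actually derive are (after fixing the sign slip in (i), where Nash-stability gives $\alpha d + D \le \beta(n_v-1)$, not $\ge$) the bound $(\alpha-1)d \le (\beta-2)(n_v-1)$ and, from a single-edge attachment by the outsider, $\alpha \ge (\beta-1) + \deg(x)(\beta-2)$ for $x\in C$. These are mutually consistent for every $\beta>2$: for instance $\beta$ slightly above $2$, moderate $d$, and $n_v$ large satisfy both, so no combination of (i)--(iii) as stated can yield $\beta\le 2$. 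The branch bound in (ii) and the appeal to the pair/clique-of-stars lemmas only exclude specific topologies; they do not cover a general component in which $v$ sits at distance $2$ from many vertices, which is exactly the regime your sketch leaves open.

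The missing idea, and the one the paper uses, is a stronger deviation for the outside player: instead of buying one edge, let a player $v'$ in another component buy edges to \emph{all} $d_v$ neighbours of $v$, thereby replicating $v$'s position. Then $dist(v',w)\le dist_s(v,w)$ for every $w\in C\setminus\{v\}$ and $dist(v',v)\le 2$, so Nash-stability gives $\alpha d_v + \sum_{w\in C} dist_s(v,w) + 2 \ge \beta\, n_v$. The hypothesis that $v$ owns all its edges is used only once, in your deviation (i): $\alpha d_v + \sum_{w\in C} dist_s(v,w) \le \beta(n_v-1)$. Subtracting the two inequalities gives $\beta\le 2$ immediately, with no case analysis, no degree bounds, and no appeal to Theorem~\ref{theo:DisNERange}. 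In short, your plan identifies the right two actors ($v$ dropping everything, and an outsider exploiting the component) but equips the outsider with too weak a move; upgrading (iii) to the neighbourhood-copying deviation closes the argument and makes the delicate ``squeeze a strict improvement out of a near-tie'' step unnecessary.
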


\begin{proof}
  Consider a player $v$ in a component $C$ with $k$ players, who pays
  for all his $d_v$ incident edges. As we have a NE, it is not
  profitable for $v$ to disconnect from $C$, i.e., $\alpha d_v +
  \sum_{w \in C} dist(v,w) \le \beta (k-1)$. Now consider a different
  player $v'$ not in $C$ that chooses to connect to all neighbors of
  $v$. As this must not be profitable, $\alpha d_v + \sum_{w \in C}
  dist(v,w) + 2 \ge \beta k$. Adding the two inequalities yields
  $\beta \le 2$, which proves the lemma. \qed
\end{proof}

\begin{lemma}
  \label{lem:oneOther2}
  Suppose there is a disconnected NE with a component $C$ of $k > 1$
  vertices. If $\alpha > (k- 1)(\beta-2) +1$, then for every
  player $v$ there is an incident edge paid by a different player $w
  \neq v$.
\end{lemma}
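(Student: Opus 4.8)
The plan is to mimic the proof strategy of Lemma~\ref{lem:oneOther}, but with a more refined deviation that accounts for the component size $k$ and the edge cost $\alpha$ being large. Suppose, for contradiction, that some player $v \in C$ pays for all $d_v$ of its incident edges. Since $s$ is a NE, it is not profitable for $v$ to delete all its edges and become a singleton inside its own (now possibly split) component; this gives the inequality
\[
\alpha d_v + \sum_{w \in C} dist_s(v,w) \le \beta(k-1),
\]
using that after disconnection $v$ pays $0$ for edges and is at penalty distance $\beta$ from the $k-1$ other players of $C$ (and unchanged distance to players outside $C$).

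Next I would consider an outside player $v'$ (not in $C$; such a player exists because the NE is disconnected and we may assume $C$ is not the only component — if it were, we could instead look at the empty-component case, but disconnectedness guarantees another vertex somewhere). Player $v'$ deviates by buying edges to all $d_v$ neighbors of $v$ in $C$. After this deviation, $v'$ is connected to $C$, and crucially $v'$ can reach every vertex $w \in C$ via a path of length at most $dist_s(v,w) + 1$ through $v$. So the new cost contribution from $C$ for $v'$ is at most $\alpha d_v + \sum_{w\in C}(dist_s(v,w)+1) = \alpha d_v + \sum_{w\in C} dist_s(v,w) + k$, replacing the old penalty contribution $\beta k$. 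Non-profitability of this deviation for $v'$ gives
\[
\alpha d_v + \sum_{w\in C} dist_s(v,w) + k \ge \beta k.
\]
Combining with the first inequality (subtracting) yields $k \le \beta(k-1) - \beta k + k$... wait, more carefully: from the two inequalities, $\beta(k-1) \ge \alpha d_v + \sum dist_s(v,w) \ge \beta k - k$, so $\beta(k-1) \ge \beta k - k$, i.e. $k \ge \beta$, which is too weak. So the naive combination only recovers something like Lemma~\ref{lem:oneOther} sharpened slightly; to get the stated bound involving $\alpha$ I need to use $d_v \ge 1$ and also the first inequality more cleverly.

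The key extra observation is that $d_v \ge 1$ (since $v$ is non-singleton) and that the first inequality forces $\alpha d_v \le \beta(k-1) - \sum_{w\in C} dist_s(v,w) \le \beta(k-1) - (k-1) = (k-1)(\beta-1)$, because $\sum_{w\in C} dist_s(v,w) \ge k-1$ (each of the $k-1$ other vertices is at distance at least $1$). Hmm, but that bounds $\alpha$ from above, whereas the hypothesis bounds it from below — so under the hypothesis $\alpha > (k-1)(\beta-2)+1$ combined with this we'd need $(k-1)(\beta-2)+1 < \alpha d_v \le (k-1)(\beta-1)$, consistent. The right move is: use that $v'$'s deviation cost includes distances to players \emph{outside} $C$ too, but those are unchanged, so they cancel. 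Instead, I should reconsider: perhaps $v'$ only needs to connect to $v$ itself (one edge), giving deviation cost change $\alpha - \beta k + \sum_{w\in C}(dist_s(v,w)+1) = \alpha - \beta k + \sum_{w\in C} dist_s(v,w) + k \ge 0$. Then using the NE condition for $v$ to not drop one edge (the costliest configuration): if $v$ drops all edges, $\alpha d_v + \sum_{w\in C} dist_s(v,w) \le \beta(k-1)$, so $\sum_{w\in C} dist_s(v,w) \le \beta(k-1) - \alpha d_v \le \beta(k-1) - \alpha$. Substituting into $v'$'s inequality: $\alpha - \beta k + \beta(k-1) - \alpha + k \ge 0$, i.e. $k - \beta \ge 0$. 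Still the same. So the improvement must come from a sharper NE condition on $v$ — namely that $v$ dropping a \emph{single} edge is also unprofitable, or from summing over structure. Given time constraints, the honest statement of the plan is: the hard part is choosing the right pair of deviations (one for $v$, one for an outside player $v'$) and the right counting so that the inequality $\alpha \le (k-1)(\beta-2)+1$ emerges as the threshold; I expect the correct argument uses that $v$ deleting all edges is unprofitable to bound $\sum dist_s(v,w) + \alpha d_v$, and that an outside player cloning $v$'s \emph{entire neighborhood} (not just one edge) so as to also shortcut among $v$'s neighbors is unprofitable, yielding an inequality where the $\alpha d_v$ and $\beta k$ terms interact with the factor $k$ in the right way. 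Closing that gap — making the two deviation inequalities combine to give exactly the stated bound — is the main obstacle.
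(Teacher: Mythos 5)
You correctly identified the first inequality that the paper also uses (if $v$ pays for all its $d_v \ge 1$ edges, then deleting them all must be unprofitable, giving $\alpha d_v + \sum_{w\in C} dist_s(v,w) \le \beta(k-1)$), but your proposal then goes looking for a second deviation by an outside player, and you end by admitting you cannot make the two inequalities combine to the stated threshold. That is a genuine gap, and the searching is in the wrong place: the paper needs \emph{no} outside-player deviation at all for this lemma. The missing ingredient is a sharper lower bound on the distance sum. You only used $\sum_{w\in C} dist_s(v,w) \ge k-1$; instead, note that exactly $d_v$ vertices of $C$ are neighbors of $v$ (distance $1$) and every other vertex of $C$ is at distance at least $2$, so
\[
\sum_{w\in C} dist_s(v,w) \;\ge\; d_v + 2(k-1-d_v) \;=\; 2(k-1) - d_v .
\]
Substituting this into the single deletion inequality gives $\alpha d_v \le \beta(k-1) - 2(k-1) + d_v$, i.e.\ $(\alpha-1)d_v \le (k-1)(\beta-2)$, and with $d_v \ge 1$ this yields $\alpha \le (k-1)(\beta-2)+1$. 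The contrapositive is exactly the lemma: if $\alpha > (k-1)(\beta-2)+1$, no player can be paying for all of its incident edges.

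The reason your outside-player attempts could never produce the stated bound is structural: cloning $v$'s neighborhood (or attaching to $v$) gives an inequality of the form $\alpha d_v + \sum dist_s(v,w) + (\text{small additive term}) \ge \beta k$, which, subtracted from the deletion inequality, cancels the $\alpha d_v + \sum dist_s$ block entirely and can only leave a relation between $\beta$ and $k$ (this is precisely how Lemma~\ref{lem:oneOther} gets $\beta \le 2$). To obtain a threshold in which $\alpha$ and $k$ both appear, the $\alpha d_v$ term must survive and be traded against the distance sum within one inequality — which is what the refined bound $2(k-1)-d_v$ accomplishes: each extra edge $v$ buys costs $\alpha$ but saves at most one unit of distance relative to the ``all non-neighbors at distance $2$'' baseline.
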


\begin{proof}
  Suppose there is a player $v$ that pays for all his $d_v \geq 1$
  incident edges. As $v$ does not want to remove all edges, we have
  $\alpha d_v + \sum_{w\in C} dist(v,w) \le \beta (k-1)$, and thus
  \[
  \alpha \le \frac{1}{d_v} \left(\beta(k-1) - \sum_{w\in C}
    dist(v,w)\right).
  \]
  For every non-neighbor vertex of $C$ we have distance at least 2,
  and thus $\sum_{w \in C} dist(v,w) \ge 2(k-1) - d_v$. Substitution
  yields $\alpha \le (k-1)(\beta-2)+1$ as desired. \qed
\end{proof}

\begin{theorem}
  \label{theo:noTree}
  For $\beta > 2$ or $\alpha > 1$ no component of a disconnected NE is
  a tree.
\end{theorem}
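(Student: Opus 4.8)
The plan is to handle the two regimes $\beta > 2$ and $\alpha > 1$ separately, in each case showing that a tree component would force some player to pay for all her incident edges, contradicting a structural lemma. First, suppose $\beta > 2$ and some component $C$ of a disconnected NE is a tree on $k \ge 2$ vertices. A tree has exactly $k-1$ edges, so by pigeonhole at least one vertex $v$ is incident to edges all of which it does not pay for --- wait, that is not quite what we want. Instead, the key observation is that a tree has a leaf: pick any leaf $v$ of $C$, with its unique incident edge $e$. Either $v$ pays for $e$ or its neighbor does. If the neighbor pays, I want to find a \emph{different} vertex that pays for all its incident edges; if $v$ pays, then $v$ itself is such a vertex (it has $d_v = 1$ incident edge, paid by itself). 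The cleanest route: orient each edge toward its payer; in a tree with $k-1$ edges and $k$ vertices, the number of edges equals the number of non-root vertices of any spanning structure, so some vertex has out-degree zero in this orientation, i.e.\ pays for none of its incident edges --- but we actually need the opposite. Let me reconsider: since there are $k-1$ edges and each is paid by exactly one endpoint, the total "payment out-degree" summed over vertices is $k-1 < k$, so some vertex $v$ pays for none of its incident edges; that is not useful directly either. The useful direction is to observe that, \emph{removing} $v$ from the picture, consider instead that in a tree every edge, when removed, disconnects the tree; so pick a leaf $v$ whose edge $e$ \emph{is} paid by $v$ itself --- such a leaf need not exist.

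So I will argue differently, and this is the main obstacle: ruling out that \emph{every} vertex has an incident edge paid by someone else. Suppose for contradiction that in the tree component $C$ every vertex $v$ has at least one incident edge paid by a \emph{different} vertex. Root $C$ arbitrarily; for each vertex $v$ pick one such "incoming-paid" edge $f(v) = \{v, w\}$ with $w \ne v$ paying. This defines a function from vertices to edges, but since $|E(C)| = k-1 < k$, two distinct vertices $v_1, v_2$ must share the same edge $f(v_1) = f(v_2) = \{v_1, v_2\}$ --- but then this single edge is paid by a vertex different from $v_1$ (namely $v_2$ or someone, but it has only two endpoints) and different from $v_2$, impossible since its payer is one of its two endpoints. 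Hence not every vertex can have an incident edge paid by another player; some vertex $v$ pays for \emph{all} $d_v \ge 1$ of its incident edges. For $\beta > 2$ this contradicts Lemma~\ref{lem:oneOther}, completing that case.

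For the case $\alpha > 1$, I combine the above counting argument with Lemma~\ref{lem:oneOther2}. If $\beta \le 2$ (otherwise the previous case applies), then $(k-1)(\beta - 2) + 1 \le 1 < \alpha$, so the hypothesis of Lemma~\ref{lem:oneOther2} holds for any component $C$ of size $k > 1$. By that lemma, every vertex of $C$ has an incident edge paid by a different player --- which the counting argument above shows is impossible for a tree. Hence no tree can be a component. (One should also dispatch the degenerate case $k = 1$: a singleton is trivially not a tree with $\ge 2$ vertices, and if one admits $k=1$ as a trivial tree, the statement is about components with edges, consistent with the lemmas' convention of non-singleton players.)

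I expect the routine part to be the inequality manipulations already packaged in Lemmas~\ref{lem:oneOther} and~\ref{lem:oneOther2}; the only genuinely delicate point is the combinatorial step that a tree forces a vertex paying for all its incident edges, which follows from $|E| = |V| - 1$ via the pigeonhole argument on the payer-assignment, together with the fact that an edge has only two endpoints so it cannot be "paid by someone other than both endpoints." Once that is in hand, both cases close immediately by appeal to the respective lemma.
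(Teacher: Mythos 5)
Your argument is correct and follows essentially the same route as the paper: combine Lemma~\ref{lem:oneOther} (for $\beta>2$) and Lemma~\ref{lem:oneOther2} (for $\alpha>1$, since $\beta\le 2$ then gives $(k-1)(\beta-2)+1\le 1<\alpha$) with the fact that a tree has $|E|=|V|-1$, so not every vertex can have an incident edge paid by another player. Your pigeonhole/injectivity step just spells out explicitly the counting the paper leaves implicit, so despite the exploratory detours the final proof matches the paper's.
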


\begin{proof}
  The first bound is a direct consequence of Lemma~\ref{lem:oneOther}
  and the fact that for a tree $|E| = |V|-1$. Thus, for disconnected
  NE with tree components $\beta \le 2$, and the second bound follows
  with Lemma~\ref{lem:oneOther2}. \qed
\end{proof}

\subsection{Non-empty Equilibria}
In the previous paragraphs we have shown that the appearance of known
NE topologies from the NCG as components in disconnected NE of the PCG
is quite limited. The existence of disconnected NE is guaranteed by
the empty network. This raises the question, under which conditions on
$\alpha$ and $\beta$ non-empty disconnected NE can evolve. We first
present a positive result.

\begin{lemma}
  \label{lem:C5}
  For $3 \le \alpha \le 4$ and $\beta \le (\alpha+11)/5$ a cycle $C_5$
  of 5 vertices can be a component of a disconnected NE.
\end{lemma}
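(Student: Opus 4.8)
The plan is to exhibit one explicit disconnected state and verify that no player gains from a unilateral deviation. Take $n \ge 6$ players; let five of them form a $C_5$ on vertices $1,\dots,5$ in which player $i$ buys the single edge to player $i+1$ (indices taken modulo $5$), and leave the remaining $n-5$ players as singletons. In $C_5$ every vertex has two vertices at distance $1$ and two at distance $2$, so each cycle player pays $\alpha$ for one edge and has internal distance sum $6$, while each singleton pays $0$ and incurs distance $\beta$ to every other player; in particular the $C_5$ is a component and the state is disconnected.

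First I would rule out deviations by a cycle player $v$, say $v=1$. Since every edge not owned by $v$ is fixed, $v$'s options are determined by which edges he adds to the path $2 - 3 - 4 - 5 - 1$ (on which $v$ is an endpoint) together with the singletons. Adding no edge gives internal distance sum $1+2+3+4 = 10$; adding one edge can bring it back down to $6$ at cost $\alpha$ (e.g.\ rebuilding $\{1,2\}$, which restores the $C_5$); adding a second edge costs another $\alpha \ge 3$ but saves at most $1$; and any edge to a singleton costs $\alpha$ while saving only $\beta - 1 \le \alpha$ (using $\beta \le (\alpha+11)/5 \le \alpha+1$, valid for $\alpha \ge 3/2$). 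Hence $v$'s best response has cost $\min\{10,\ \alpha+6\} + \beta(n-5)$, which coincides with $v$'s current cost $\alpha+6+\beta(n-5)$ exactly when $\alpha \le 4$.

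Next I would rule out deviations by a singleton $v$. Buying a single edge into $C_5$, say to vertex $1$, replaces the five penalties $\beta$ by the distances $1,2,2,3,3$, lowering $v$'s distance to $C_5$ from $5\beta$ to $11$ at cost $\alpha$; this is unprofitable exactly when $\alpha + 11 \ge 5\beta$, i.e.\ $\beta \le (\alpha+11)/5$. Buying two edges into $C_5$ is cheapest when the two chosen vertices are at cycle-distance $2$ (distance sum $1+1+2+2+2 = 8$ at cost $2\alpha$), which is unprofitable when $2\alpha + 8 \ge 5\beta$; for $\alpha \ge 3$ this follows from the one-edge bound since $2\alpha+8 \ge \alpha+11$. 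Each further edge into $C_5$ saves at most $1$ at cost $\alpha \ge 3$, and edges between singletons save only $\beta-1 < \alpha$, so no purchase by $v$ is profitable. Combining the two cases, under $3 \le \alpha \le 4$ and $\beta \le (\alpha+11)/5$ the constructed state is a disconnected NE.

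The bulk of the argument, and the place where $\alpha \ge 3$ enters, is the singleton's best-response analysis: one must confirm that among all ways a singleton can attach to $C_5$ the single-edge attachment gives the largest marginal benefit, so that $\beta \le (\alpha+11)/5$ is exactly the right threshold; for $\alpha < 3$ the two-edge attachment would be binding instead. The cycle-player side is routine once one observes that the only substantive move is to turn the $C_5$ into a $P_5$ (with $v$ at one end) by deleting $v$'s edge, which is controlled by $\alpha \le 4$.
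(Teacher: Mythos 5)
Your proof is correct and takes essentially the same route as the paper: the same construction (each cycle player buying one edge, remaining players singletons), the bound $\alpha\le 4$ from deleting the owned edge, and $\beta\le(\alpha+11)/5$ from a singleton's single-edge attachment, with $\alpha\ge 3$ ruling out multi-edge attachments. You simply spell out in more detail (explicit two-edge distance sum $8$, marginal value $\beta-1$ of edges to singletons) what the paper states tersely.
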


\begin{proof}
  Consider a disconnected NE with such a component. Label the players
  in $C_5$ from 0 to 4 along a Euclidean tour. Each player $i$ pays
  for edge $(i,i+1 \mod 5)$. In this case no player can disconnect the
  component. If a player removes his edge, the increase in distance
  cost is 4, thus $\alpha \le 4$. Furthermore, every vertex in the
  cycle has a sum of distances of 6. As for each vertex $v \not\in
  C_5$ it must not be profitable to connect to a vertex $i \in C_5$,
  we get $\alpha \ge 5\beta - 11$. Every additional edge yields a cost
  improvement of at most 3, hence for $\alpha \ge 3$ it is optimal for
  $v$ to connect to at most one vertex from $C_5$. This means that for
  $\beta \le \frac{\alpha + 11}{5} \le 3$ there can be a component
  $C_5$ in a disconnected NE. \qed
\end{proof}

Note that a NE with $C_5$ is transient and not strict, i.e., there is
a sequence of strategy changes that leaves the individual cost of the
changing player identical, but leads into a non-equilibrium state. For
instance player 2 can exchange edge (2,3) by edge (2,4) without cost
change. Afterwards player 1 can strictly improve by purchasing (1,4)
instead of (1,2).

In contrast to the restricted interval, for which we can show
existence, there is an unbounded region of parameter values, for which
the empty network is the only disconnected network - in particular if
$\alpha$ or $\beta$ are large compared to $n$.

\begin{lemma}
  \label{lem:nonEmptyNE}
  In a non-empty disconnected NE let $n_l$ be the minimum size and
  $diam_l$ the minimum diameter of any non-singleton component. Then
  \begin{tabbing}
    \hspace{2.8cm} \= (1) $\alpha < 12n_l \log n_l$ \hspace{1.8cm} \= (3) $\beta < 1 + 14\sqrt{n_l \log n_l}$ \\
    \> (2) $\beta \le 1 + 2\cdot diam_l$ \> (4) if $n > 6$, then $\beta < n/2$
 \end{tabbing}
\end{lemma}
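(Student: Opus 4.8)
The plan is to prove the four inequalities separately, each by exhibiting a profitable deviation when it fails, and to obtain (3) and (4) by feeding the earlier bounds into known facts about NCG equilibria. For \emph{bound (1)}, take a non-singleton component $C$ of minimum size $n_l$; as already observed, $C$ must be a Nash equilibrium of the NCG on $n_l$ players with parameter $\alpha$. It is known that for $\alpha \ge 12\,m\log m$ every equilibrium of an NCG on $m$ players is a tree, so $\alpha \ge 12\,n_l\log n_l$ would make $C$ a tree; since $n_l \ge 2$ this also forces $\alpha > 1$, contradicting Theorem~\ref{theo:noTree}. Hence $\alpha < 12\,n_l\log n_l$.

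For \emph{bound (2)}, if $\beta \le 2$ it is immediate since $diam_l \ge 1$, so assume $\beta > 2$; then by Theorem~\ref{theo:noTree} no component is a tree. Let $C$ realize the minimum diameter $d = diam_l$, put $k = |C|$, and pick a vertex $v'$ outside $C$ (it exists because $s$ is disconnected). Consider the deviation in which $v'$ builds a single edge to a center $v$ of $C$: since $dist_s(v,w) \le d$ for all $w \in C$, this changes $v'$'s cost by at most $\alpha + k + (k-1)d - k\beta$, which must be nonnegative at equilibrium, so $\beta \le \alpha/k + 1 + (k-1)d/k$. To finish I would show $\alpha \le kd$: $C$ contains a cycle, hence an edge $e$ whose deletion keeps $C$ connected, and the player paying for $e$ must not profit from deleting it; estimating the resulting increase of that player's distance sum — only shortest paths through $e$ change, and each only by at most the length of the rest of a shortest cycle through $e$, which is $O(d)$ — gives $\alpha = O(d^2) \le kd$ and therefore $\beta < 1 + 2d$. \textbf{This edge-deletion estimate is the crux of (2)}: one has to argue that deleting a cycle edge of a diameter-$d$ equilibrium raises the deleter's total distance by at most about $kd$. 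Bound \emph{(3)} then follows by combining (1) and (2) with the classical fact that the diameter of any NCG equilibrium is $O(\sqrt{\alpha})$: applied to the minimum-diameter component (whose size is at least $n_l$) together with $\alpha < 12\,n_l\log n_l$ from (1) this yields $diam_l = O(\sqrt{n_l\log n_l})$, and substituting into $\beta \le 1 + 2\,diam_l$ and tracking constants gives $\beta < 1 + 14\sqrt{n_l\log n_l}$.

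For \emph{bound (4)}, take the minimum-size non-singleton component $C$ (so $k = n_l \le n-1$), a vertex $v'$ outside it, and the deviation where $v'$ attaches to the vertex $v^* \in C$ minimizing $\sum_{w\in C} dist_s(v^*,w)$; the Nash condition gives $\alpha \ge k\beta - k - \sum_{w\in C} dist_s(v^*,w)$. Plugging in $\sum_{w\in C} dist_s(v^*,w) \le (k-1)\,diam(C)$, the diameter bound $diam(C) = O(\sqrt{\alpha})$, and $\alpha < 12\,k\log k$ bounds $\beta$ in terms of $k \le n-1$, which yields $\beta < n/2$ once $n$ is large enough. The remaining small values of $n$, where this asymptotic estimate is not yet sharp, are dispatched using the structural restrictions already established: a non-singleton component of size at most $6$ that is neither a tree, nor a clique, nor a clique of stars is essentially only a short cycle, and Lemma~\ref{lem:C5} together with the corollary ruling out clique components (and the bound $\alpha\ge\beta-1$ of Theorem~\ref{theo:DisNERange}) pin down the admissible $\alpha,\beta$, so only a few explicit configurations need checking. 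I expect the real obstacles to be precisely the edge-deletion estimate in (2) and these boundary cases in (4); everything else is bookkeeping with the lemmas above and the known NCG tree/diameter bounds.
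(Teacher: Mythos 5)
Your parts (1) and (3) follow the paper's route, but the two places you yourself flag as the crux are genuine gaps. For (2), the estimate you defer is not just unproven — the sketched reasoning is wrong: when a player deletes a cycle edge $e$, the number of vertices whose distance to him changes is not $O(d)$ in general (it can be $\Theta(k)$, e.g.\ when most of the component hangs off a vertex a quarter of the way around the cycle), so "each change is $O(d)$" only gives $\alpha \le (k-1)(g_e-2) \le (k-1)(2d-1) \approx 2kd$, not $\alpha \le kd$. Plugging that into your decoupled inequality $\beta \le \alpha/k + 1 + (k-1)d/k$ yields only $\beta \le 3d+1$, which misses the stated bound $\beta \le 1+2\,diam_l$ and, propagated through $diam \le \sqrt{4\alpha+1}$, also misses the constant $14$ in (3). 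The paper avoids this loss by \emph{coupling} the two Nash conditions instead of bounding $\alpha$ on its own: taking $U$ a globally smallest cycle and $v$ a player paying an edge of $U$, there is an (antipodal) vertex $u\in U$ whose distances are unchanged by the deletion, so $dist_{s'}(v,w) \le dist_s(v,u)+dist_s(u,w) \le 2\,diam(C)$ and hence $\alpha \le 2n_C\,diam(C) - \sum_{w\in C} dist_s(v,w)$; adding the outside-attachment condition $\alpha + n_C + \sum_{w\in C} dist_s(v,w) \ge n_C\beta$ makes the distance sum cancel and gives $\beta \le 2\,diam(C)+1$ exactly.

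For (4) the gap is worse. Your chain of estimates gives $\beta \lesssim 12\log k + \sqrt{48\,k\log k}$ with $k\le n-1$, which is below $n/2$ only once $n$ exceeds a threshold in the hundreds or thousands — not $n>6$ — and for all smaller $n$ the components you must rule out can have size up to $n-1$, so your proposed dispatch via "size at most 6, hence tree/clique/clique-of-stars/cycle" does not cover the remaining range; moreover trees, cliques, cliques of stars and affine-plane graphs are only the \emph{known} NCG equilibria, not a classification you may appeal to. The paper's argument for (4) is elementary and uniform: from the same coupled inequalities as in (2) one gets $\beta \le 1 + \frac{1}{n_C}\sum_{w\in C} dist_{s'}(v,w)$, and the post-deletion distance sum of $v$ is at most that of an endpoint of a path, $\frac{(n_C-1)n_C}{2}$, with equality essentially only when $C$ is a cycle; non-cycles then give $\beta < \frac{n_C+1}{2} \le \frac{n}{2}$ directly, and cycle components are shown to have $n_C \le 5$, whence $\beta \le 3 < n/2$ for $n>6$. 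You would need both the cancellation trick of (2) and this extremal path/cycle argument to recover the lemma as stated.
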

\begin{proof}
  For the first bound consider $\alpha \ge 12n_l \log n_l$ and
  a component with $n_l$ players. This component must represent a
  NE in a NCG with the same $\alpha$ and $n_l$ players, and thus
  according to~\cite{Albers06} must be a tree. This contradicts
  Theorem~\ref{theo:noTree} and the bound follows.
  
  Now consider a non-empty disconnected NE $s$ for $\beta > 2$, and
  let $C$ be a non-singleton component. As $C$ is no tree, it must
  contain at least one cycle. Let $U$ be a smallest of all cycles in
  $C$, and let $v$ be an arbitrary player that constructed some edge
  $e$ of $U$. Denote by $s'$ the state that evolves if player $v$
  removes edge $e$. Note that by this removal no additional pair of
  players gets disconnected. As $s$ is a NE, we have
  \begin{equation}
    \label{eq1}
    \alpha \le \sum_{w \in C}  (dist_{s'}(v,w) - dist_s(v,w)).
  \end{equation}
  As we have chosen $U$ to be of minimum size, all shortest distances
  between vertices of $U$ are given by the paths along the
  cycle. Thus, there is always a vertex $u$, for which the distances
  in $s$ and $s'$ are the same. This yields 
  \[ dist_{s'}(v,w) \le dist_{s'}(v,u)+dist_{s'}(u,w) =
  dist_s(v,u)+dist_s(u,w)\]
  for all $w \in C$. With $n_C = |C|$ we can conclude $\alpha \le
  2n_C\cdot diam(C) - \sum_{w \in C} dist_s(v,w)$. On the other hand,
  no vertex outside $C$ must be able to profit from a connection to
  $v$, hence $\alpha + n_C + \sum_{w \in C} dist(v,w) \ge
  n_C\beta$. The last two inequalities imply that $\beta \le 2\cdot
  diam(C)+1$ and thus deliver the second bound. As each component $C$
  must be a NE of a NCG, we know from~\cite{Fabri03} that $diam(C) \le
  \sqrt{4\alpha + 1}$. Together with the first bound on $\alpha$ shown
  above this implies the third bound $\beta < 1 + 14\sqrt{n_l \log
    n_l}$.  For the proof of the last bound the inequality (\ref{eq1})
  and the bound $\alpha + n_C + \sum_{w \in C} dist(v,w) \ge n_C\beta$
  imply
  \[ \beta \le 1 + \frac{1}{n_C} \sum_{w \in C} dist_{s'}(v,w) \]
  The sum of distances for $v$ is maximal iff $C$ in $s$ is a cycle
  and thus in $s'$ is a chain with $v$ being one of its endpoints. In
  this case
  \[ \sum_{w \in C} dist_{s'}(v,w) \le \frac{(n_C-1)n_C}{2}~. \]
  If $C$ is not a cycle in $s$, then the inequality is strict and the
  previous formulas yield $\beta < 1 + \frac{n_C-1}{2} =
  \frac{n_C+1}{2} \le \frac{n}{2}$. If $C$ is a cycle in $s$, it is
  straightforward to show that it must hold $n_C \le 5$ as $s$ is a
  NE. Here we use the assumption that $n > 6$ and get $\beta \le
  \frac{n_C+1}{2} \le 3 < \frac{n}{2}$. This proves the last
  bound. \qed
\end{proof}

In contrast to these bounds, we have not been able to derive any
non-empty disconnected NE for values of $\beta > 3$. This led us to
formulate the following conjecture.
\begin{conjecture}[Constant Penalty Conjecture]
  There is a constant $\beta'$ such that for $\beta > \beta'$ the only
  disconnected NE is $s_\emptyset$.
\end{conjecture}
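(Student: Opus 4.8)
\medskip
\noindent\emph{A possible line of attack.}\quad
The natural strategy is to reduce the conjecture to a statement about the ordinary NCG. By Lemma~\ref{lem:nonEmptyNE}(2), every non-singleton component $C$ of a disconnected NE satisfies $\beta \le 1 + 2\,diam(C)$, and, as already observed, $C$ must be a Nash equilibrium of the NCG on $|C|$ players with $\alpha \ge \beta-1$ (Theorem~\ref{theo:DisNERange}), with no tree component (Theorem~\ref{theo:noTree}), and hence with $\alpha < 12|C|\log|C|$ (since otherwise $C$ would be a tree by~\cite{Albers06}). So it suffices to show that the NCG equilibria capable of occurring as components of a disconnected NE have diameter bounded by an absolute constant $D$; then $\beta' := 1 + 2D$ proves the conjecture. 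First I would make this reduction precise and assemble the complete list of constraints such a $C$ must satisfy.

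Then, to bound the diameter without resolving the (notoriously open) constant-diameter question for the plain NCG, I would press the two deviations available in the PCG but not in the NCG. The first is an outside player joining $C$: connecting to a center $v^{*}$ of $C$ yields $\beta \le 1 + \frac{\alpha + \sum_{w\in C} dist(v^{*},w)}{|C|}$, and, more generally, connecting to a well-spread set $T$ of $t$ vertices yields $\beta \le 1 + \frac{t\alpha + \min_{|T|=t}\sum_{w\in C}\min_{u\in T} dist(u,w)}{|C|}$ for every $t$, so choosing $t$ to balance edge cost against covering cost is central. The second is an inside player splitting off a subgraph $C_x$ across one of its purchased edges $(u,x)$, which forces $\sum_{w\in C_x} dist(u,w) \le |C_x|\beta - \alpha$, i.e.\ the far side of every such cut must be radially small around $u$. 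I would combine these with the fact that $C$ is not a tree (Theorem~\ref{theo:noTree}), so $C$ contains a cycle and the minimum-cycle telescoping argument of Lemma~\ref{lem:nonEmptyNE} applies, and with the structural description of NCG equilibria in~\cite{Albers06}, aiming to show that a component of large diameter would necessarily possess a peripheral, essentially tree-like region hanging off a small core, contradicting Theorem~\ref{theo:noTree} (or Lemmas~\ref{lem:oneOther} and~\ref{lem:oneOther2}). An argument of this shape would also determine $\beta'$; since no non-empty disconnected NE with $\beta>3$ is known, the natural target is $\beta'=3$.

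The hard part is the last step. After the bounds above, the surviving regime is essentially $\beta - 1 \le \alpha$ with $\alpha$ at most of order $n\log n$ and $diam(C)$ at least $(\beta-1)/2$, and this overlaps the parameter range around $\alpha \approx n$ in which it is not known whether equilibria of the NCG have diameter bounded by a constant -- the unresolved part of Figure~\ref{fig:NCG}. So a complete proof must either make progress on that NCG question (at least for equilibria that embed as components of a disconnected PCG equilibrium) or -- the route I would bet on -- produce a self-contained argument that the ``no outside player joins, no inside player splits'' constraints are by themselves incompatible with large diameter. The reason for optimism is empirical rather than structural: those same constraints already collapse pairs, larger cliques, cliques of stars, and all trees to single-point or empty parameter windows (the pair and clique results, Lemma~\ref{lem:CliqueOfStars}, and Theorem~\ref{theo:noTree}), and the conjecture asserts this collapse is universal; what is missing -- and what I expect to be the genuine obstacle -- is a single mechanism that enforces it for an arbitrary component.
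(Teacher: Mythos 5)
This statement is a \emph{conjecture}: the paper does not prove it, and explicitly lists proving or disproving it as an open problem in the conclusion. So there is no proof of the paper's to compare against, and your proposal is not a proof either --- by your own admission it is a programme with the decisive step missing.

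What you do establish is exactly the reduction the paper itself records in the remark immediately following the conjecture: by Lemma~\ref{lem:nonEmptyNE}(2) any non-singleton component $C$ of a disconnected NE forces $\beta \le 1 + 2\,diam(C)$, the component must be a non-tree NE of the NCG on $|C|$ players (Theorems~\ref{theo:DisNERange} and~\ref{theo:noTree}, hence $\alpha < 12|C|\log|C|$ via~\cite{Albers06}), so the conjecture would follow from a constant bound $D$ on the diameter of such components, with $\beta' = 1+2D$. This is correct but is precisely where the paper stops: it observes that falsity of the conjecture requires non-tree NCG equilibria of diameter $\omega(1)$, which is the unresolved regime of the NCG price-of-anarchy picture (the gap between $\Theta(1)$ and $o(n^\epsilon)$ in Figure~\ref{fig:NCG}). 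Your additional PCG-specific inequalities (attaching an outside player to $t$ well-spread vertices, and the split constraint $\sum_{w\in C_x} dist(u,w) \le |C_x|\beta - \alpha$ across a purchased cut edge) are valid necessary conditions, but you give no mechanism that turns them into a diameter bound for an \emph{arbitrary} component; the evidence you cite (pairs, cliques, cliques of stars, trees collapsing to degenerate parameter windows) is a collection of special cases, each handled by an ad hoc computation, not a uniform argument. The genuine gap is therefore the one you name yourself: without either progress on the constant-diameter question for NCG equilibria or a new self-contained argument from the no-join/no-split constraints, the conjecture remains open, and nothing in your proposal (or in the paper) closes it.
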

Note that our bounds imply that if the conjecture is false, then there
must be non-tree NE in the NCG with a diameter in $\omega(1)$. This
seems quite unlikely, as all non-tree NE found so far have diameter at
most 3.

\section{Price of Anarchy}
\label{sect:PoA}
In this section we consider the price of anarchy in the PCG. We first
present an overview of the social optima of the game in
Figure~\ref{fig:Opt}. 

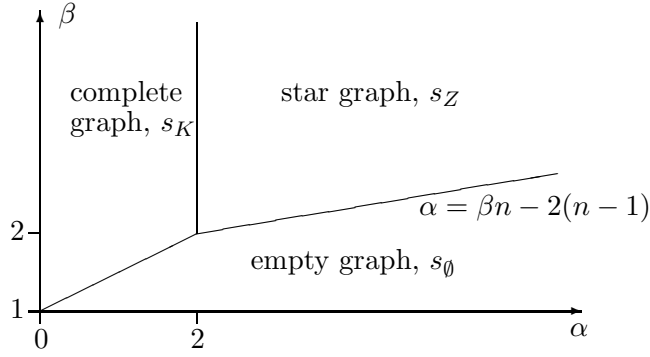
\begin{figure}
\begin{center}
\setlength{\unitlength}{.8mm}
\begin{picture}(95,60)
\put(3,5){\vector(1,0){92}}
\put(5,3){\vector(0,1){52}}
\put(31,3){\line(0,1){2}}
\put(5,5){\line(2,1){26}}
\put(3,18){\line(1,0){2}}
\put(31,18){\line(0,1){35}}
\put(31,18){\line(6,1){60}}
\put(10,40){\normalsize complete}
\put(10,35){\normalsize graph, $s_K$}
\put(40,12){\normalsize empty graph, $s_\emptyset$}
\put(45,40){\normalsize star graph, $s_Z$}
\put(68,21){$\alpha = \beta n -2(n-1)$}
\put(93,1){$\alpha$}
\put(8,53){$\beta$}
\put(4,-1){0}
\put(0,4){1}
\put(30,-1){2}
\put(0,17){2}
\end{picture}
\end{center}
\caption{\label{fig:Opt} Social optima in the PCG}
\end{figure}

\begin{theorem}
  The social optimum is $s^* = s_\emptyset$ for $2\beta - 2 < \alpha <
  2$, as well as for $\alpha \ge \max\{2, \beta n - 2(n-1)\}$. It
  holds that $s^* = s_K$ in the range $\alpha \le \min\{2, 2\beta -
  2\}$. In the remaining range the social optimum is $s^* = s_Z$.
\end{theorem}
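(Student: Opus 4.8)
The plan is to compute, for each of the three candidate states $s_\emptyset$, $s_K$, $s_Z$, the social cost as a function of $\alpha$, $\beta$, and $n$, and then argue that no other state can do better in the claimed ranges. For the empty state we have $c(s_\emptyset) = \beta n(n-1)$, since every ordered pair of distinct players contributes a penalty $\beta$ and no edges are built. For the complete graph $c(s_K) = \alpha\binom{n}{2} + n(n-1) = (n-1)\left(\tfrac{\alpha n}{2} + n\right)$, with one unit of distance between each of the $n(n-1)$ ordered pairs and $\binom{n}{2}$ edges each costing $\alpha$. For the star $s_Z$ we get $c(s_Z) = \alpha(n-1) + 2(n-1)^2$, since there are $n-1$ edges and the distance sum over all ordered pairs is $2(n-1) + (n-1)(n-2)\cdot 2 = 2(n-1)^2$ (the center is at distance $1$ from each leaf, and two leaves are at distance $2$).

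The second and main part is to show these are in fact optimal in the stated regimes. First I would restrict attention to connected graphs versus graphs with at least one inter-component pair: any disconnected non-empty graph can be compared against $s_\emptyset$ (drop all edges: saves edge cost, and changes each finite distance of a connected pair into $\beta$, which is worse only if $\beta$ is large — so one must be careful) and against a connected graph on the same components joined up. The cleanest route is a two-step case analysis: (i) \emph{Among connected graphs}, the minimizer is either $s_K$ or $s_Z$. This is essentially the social-optimum argument of Fabrikant et al.~\cite{Fabri03} for the NCG: for $\alpha < 2$ adding any missing edge strictly decreases cost (it saves at least $2$ in summed distances per added edge while costing $\alpha$), forcing $s_K$; for $\alpha \ge 2$ a standard argument shows a minimum-cost connected graph is a star (or does no worse than a star), since stars simultaneously minimize edge count among connected graphs and keep all distances at most $2$. (ii) \emph{The empty graph versus the best connected graph}: comparing $c(s_\emptyset) = \beta n(n-1)$ with $\min\{c(s_K), c(s_Z)\}$ gives the threshold lines. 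Comparing with $s_K$: $\beta n(n-1) \le (n-1)(\tfrac{\alpha n}{2}+n)$ iff $\alpha \ge 2\beta - 2$, which (combined with $\alpha < 2$ so that $s_K$ beats $s_Z$) carves out the region $\alpha \le \min\{2, 2\beta-2\}$ for $s_K$ and its complement-side for $s_\emptyset$. Comparing with $s_Z$: $\beta n(n-1) \le \alpha(n-1) + 2(n-1)^2$ iff $\alpha \ge \beta n - 2(n-1)$, which gives the line $\alpha = \beta n - 2(n-1)$ in Figure~\ref{fig:Opt} separating $s_\emptyset$ from $s_Z$ in the $\alpha \ge 2$ half-plane. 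Assembling: $s^* = s_K$ when $\alpha \le \min\{2, 2\beta-2\}$; $s^* = s_Z$ when $\alpha \ge 2$ but $\alpha < \beta n - 2(n-1)$ (equivalently $\beta > (\alpha + 2(n-1))/n$, the region above the line), and also in the sliver $2\beta - 2 \le \alpha < 2$ where among connected graphs $s_Z$ already beats $s_K$; and $s^* = s_\emptyset$ in the two remaining regions, namely $2\beta - 2 < \alpha < 2$ and $\alpha \ge \max\{2, \beta n - 2(n-1)\}$.

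The step I expect to be the genuine obstacle is ruling out \emph{disconnected non-empty} states as social optima — the above only compares the three named states and the class of connected graphs. I would handle this as follows: take any disconnected state $s$ with at least one edge, with components $C_1, \dots, C_r$ of sizes $n_1, \dots, n_r$. Its cost is $\sum_t c(s[C_t]) + \beta \sum_{t \ne t'} n_t n_{t'}$, where $c(s[C_t])$ is the within-component social cost. Now contract each component's optimal internal structure to whichever of "complete graph on $C_t$" or "star on $C_t$" is cheaper (by step (i) applied within the component, since each component of a social optimum had better be a within-component social optimum), and then consider merging two components by adding a single edge between their centers: this costs one extra $\alpha$ but replaces each of the $2n_t n_{t'}$ cross pairs' penalty $\beta$ by a distance at most (diameter of star)+1+(diameter of star) $\le 5$, a net change of $\alpha + 2n_t n_{t'}(5 - \beta)$ roughly — strictly negative once $\beta > 3$ or so, and for small $\beta$ one instead compares directly against $s_\emptyset$. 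The bookkeeping here is the fiddly part; the honest version is to show that for every $(\alpha,\beta)$ the minimum over \emph{all} disconnected states is attained either at $s_\emptyset$ (when $\beta$ is not too large relative to $\alpha/n$) or is beaten by a connected star (when $\beta$ is large), so disconnected non-empty states are never strictly optimal, and then the three-way comparison above finishes the proof. I would present the component-merging inequality carefully and note that equality cases (on the boundary lines) are consistent with the non-strict "$\le$" in the theorem's phrasing.
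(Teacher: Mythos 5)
Your cost formulas for $s_\emptyset$, $s_K$, $s_Z$, the pairwise threshold comparisons ($\alpha \gtrless 2\beta-2$ against $s_K$, $\alpha \gtrless \beta n - 2(n-1)$ against $s_Z$), and the reduction of the connected case to the known NCG optimum are all correct and match the easy part of the paper's argument. But the step you yourself flag as the obstacle --- excluding disconnected \emph{non-empty} states as optima --- is exactly where your sketch breaks, and it is the part the paper spends its embedded lemma on. In the region $2 < \alpha < \beta n - 2(n-1)$ the penalty can be as small as $\beta$ slightly above $2$ (e.g.\ $\beta = 2.5$, $\alpha = 3$, $n$ large), and there your center-to-center merge does not work: adding one edge between two star centers turns each of the $2n_t n_{t'}$ ordered cross pairs from penalty $\beta$ into a distance of up to $3$, so the merge is a strict loss whenever $\beta < 3$, let alone with your bound of $5$; and the fallback "compare against $s_\emptyset$" is aimed at the wrong target, since in this regime $c(s_\emptyset) = \beta n(n-1)$ exceeds $c(s_Z)$, so a hypothetical disconnected optimum could easily be cheaper than $s_\emptyset$ and still has to be beaten by $s_Z$. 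The paper closes this gap differently: assuming a disconnected optimum (whose components must be stars, with at least one non-singleton component $C_1$), it plays off \emph{two} modifications --- (a) delete $C_1$'s edges and re-hang every player of $C_1$ as a leaf of $C_2$'s center, which makes all cross distances at most $2$ and yields $\alpha \ge 2 - 4n_1 n_2 + 2 n_1 n_2 \beta$, and (b) delete $C_1$ outright, which yields $\alpha \le n_1\beta - 2(n_1-1)$ --- and the two inequalities combined force $\beta \le 2$, contradicting $\alpha < \beta n - 2(n-1)$ with $\alpha > 2$. Some argument of this combined type (not a single one-edge merge) is needed; as written, your proof does not cover $2 < \beta \le 3$, which is a genuinely occupied part of the $s_Z$ region. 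A similar per-component inequality (the paper's $c(C) \ge 2n_C(n_C-1) + (\alpha-2)m_C \ge n_C(n_C-1)\beta$ when $\alpha \ge \beta n - 2(n-1)$, $\beta \ge 2$) is also what rules out non-empty disconnected states in the $s_\emptyset$ region; your sketch leaves that implicit as well, though it is a smaller repair.

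One further slip to fix: in your assembly you assign the sliver $2\beta - 2 \le \alpha < 2$ to $s_Z$ "where among connected graphs $s_Z$ already beats $s_K$". That is false --- for $\alpha < 2$ the complete graph beats the star among connected graphs (per pair, $\alpha + 2 \le 4 \le 2\,dist$) --- and it contradicts your own final sentence, which (correctly, and in agreement with the theorem) assigns $2\beta - 2 < \alpha < 2$ to $s_\emptyset$: there the per-pair cost $2\beta < \alpha + 2 \le 4$ makes disconnection cheapest.
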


\begin{proof}
  First, assume $\alpha \le 2$. In this case $\alpha+2 \le 4 \le 2
  dist_s(v,w)$ for every indirectly connected pair of players. Hence,
  a direct connection can only result in smaller social cost. If in
  addition $2\beta-2 \le \alpha$, then disconnectivity is the
  cheapest alternative and thus $s_\emptyset$ is optimal. Otherwise,
  for $2\beta-2 \ge \alpha$ the state $s_K$ corresponding to the
  complete graph is optimal.

  In the case $\alpha > 2$ and $\beta \le 2$ we have $2\beta-2 <
  \alpha$ and $2\beta \le 2dist_s(v,w)$ for any $dist_s(v,w) \ge
  2$. Hence, $s_\emptyset$ represents a social optimum.

  For $\alpha \ge \beta n - 2(n-1)$ and $\beta \ge 2$ consider a state
  $s$ corresponding to a non-empty graph $G_s$. Suppose $C$ is a
  non-singleton component of $G_s$ and $n_C = |C| > 1$. By $c(C)$ we
  consider only the costs introduced by pairs of players in $C$, and
  by $m_C$ the number of edges within component $C$. It is easy to
  note $c(C) \ge 2n_C(n_C-1) + (\alpha - 2) m_C$
  (see~\cite{Fabri03}). If $s$ is a candidate for a social optimum,
  component $C$ must be a star. However,
  \begin{equation}
    \label{eq:costLB}
    c(C) \ge 2n_C(n_C-1) + (\alpha - 2) m_C \ge 2n_C(n_C-1) +
    n_C(\beta - 2)(n_C-1) = n_C(n_C-1)\beta.
  \end{equation}
  Thus, by removing all edges from non-singleton components we reduce
  the social cost of the state even further. This implies that
  $s_\emptyset$ is a social optimum.

  Finally, in the case $2 < \alpha < \beta n - 2(n-1)$ the cheapest
  \emph{connected} state is $s_Z$ corresponding to the star. We show
  here that for every \emph{disconnected} state there is a strictly
  cheaper state. Hence, no disconnected state can be a social optimum,
  and the social optimum remains $s_Z$. This finishes the proof of the
  theorem.

  \begin{lemma}
    If $2 < \alpha < \beta n - 2(n-1)$, then for each disconnected
    state $s$ there is a state with strictly less social cost.
  \end{lemma}

  \begin{proof}
    Suppose for contradiction that there is a disconnected state $s$
    which is the social optimum. Similarly to the last paragraph we
    see that the components of $s$ must be stars. In addition, $s$
    must have at least one non-singleton component, because
    $s_\emptyset$ is more costly than $s_Z$:
    \[
    n(n-1)\beta > (n-1)(\alpha + 2n - 2) = \alpha(n-1) + 2(n-1)^2.
    \]
    Suppose $s$ has two (star) components $C_1$ and $C_2$ of $n_1$ and
    $n_2$ players, respectively, and assume $n_1 > 1$. We will
    consider two different states, which both must not be
    cheaper. This delivers a contradiction.

    On the one hand, suppose we remove all edges from $C_1$ and
    instead connect all players of $C_1$ to the center player of
    $C_2$. Then regarding players $v \not\in C_1 \cup C_2$ there is no
    cost change. We need one additional edge for the center player
    $v_1$ of $C_1$. Every other player from $C_1$ is now at a distance
    of 2 to $v_1$, thus there is an increase in distance of $2(n_1 -
    1)$. For the newly created distances between players of $C_1$ and
    $C_2$ we have
    \[
    \sum_{v \in C_1, w\in C_2} dist(v,w) = 2(n_2-1)n_1 + n_1,
    \]
    which is counted twice in the social cost. However, we save a
    penalty of $2n_1n_2\beta$. By assumption the total changes must
    not lead to an improved state, hence
    \[
    \alpha + 2(n_1-1) + 4(n_2-1)n_1 + 2n_1 - 2n_1n_2\beta \ge 0.
    \]
    Thus, $\alpha \ge 2-4n_1n_2 + 2n_1n_2\beta$. On the other hand,
    consider the change in cost when we remove all edges from
    $C_1$. Again, by assumption this must not lead to a cheaper state,
    so
    \[
    \alpha(n_1 - 1) + 2(n_1 - 1)^2 \le n_1(n_1 - 1)\beta.
    \]
    This implies $\alpha \le n_1\beta - 2(n_1 - 1)$. Combining the
    bounds leads to $n_1(\beta - 2) + 2 \ge 2 - 4n_1n_2 +
    2n_1n_2\beta$, which implies $4n_2 - 2 \ge \beta(2n_2 - 1)$ and
    thus $\beta \le 2$, which is a contradiction to $2 < \beta n -
    2(n-1)$. \qed
  \end{proof}
\end{proof}

For $\alpha < \beta-1$ we have seen in Theorem~\ref{theo:DisNERange}
that no disconnected NE exists. In addition, the next theorem shows
that in this case a finite penalty for disconnectivity cannot disrupt
any NE of the NCG. Hence, for this parameter range the price of
anarchy is identical to the NCG.
\begin{theorem}
  For $\alpha < \beta-1$ the NE of the PCG are exactly the NE of the
  NCG, so in this parameter range the prices of anarchy and stability
  remain the same in the PCG as in the NCG, respectively.
\end{theorem}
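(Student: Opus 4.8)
The plan is to show that the two equilibrium sets coincide, and then read off the statement about the prices of anarchy and stability from the observation that in this parameter range all Nash equilibria \emph{and} the social optimum are connected, so that the PCG and NCG cost functions agree on all the states that matter.

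\textbf{PCG equilibria are NCG equilibria.} Since $0 < \alpha < \beta-1$, Theorem~\ref{theo:DisNERange} says every PCG Nash equilibrium $s$ is connected. For a connected state all pairwise distances are finite, so $c_i^{\mathrm{PCG}}(s) = c_i^{\mathrm{NCG}}(s)$. Moreover, any deviation of a player $i$ that is profitable in the NCG must keep $i$ at finite distance from everyone, hence keeps the whole graph connected, hence has the same (finite) cost in the PCG; so it would be profitable in the PCG as well. As $s$ is a PCG equilibrium there is no such deviation, and $s$ is an NCG equilibrium. (This direction uses $\alpha<\beta-1$ only to exclude disconnected PCG equilibria.)

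\textbf{NCG equilibria are PCG equilibria.} Let $s$ be an NCG equilibrium; it is connected. Deviations that keep the graph connected have equal cost in both games, so none of them is profitable in the PCG. Suppose, for contradiction, that some player $i$ has a PCG-profitable deviation $s_i'$ that disconnects the graph, and let $C\ni i$ be the component of $i$ in $(s_i',s_{-i})$, so $V\setminus C\neq\emptyset$. Consider $s_i'':=s_i'\cup(V\setminus C)$, for which $(s_i'',s_{-i})$ is connected. Going from $s_i'$ to $s_i''$ costs player $i$ an extra $\alpha\,|V\setminus C|$ in edges, replaces each of the $|V\setminus C|$ penalty terms $\beta$ by a distance of exactly $1$, and cannot increase any distance inside $C$; the net change in $c_i$ is at most $|V\setminus C|\,(\alpha+1-\beta)<0$ — this is precisely the improving move used in the proof of Theorem~\ref{theo:DisNERange}. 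Hence $c_i^{\mathrm{PCG}}(s_i'',s_{-i}) < c_i^{\mathrm{PCG}}(s_i',s_{-i}) < c_i^{\mathrm{PCG}}(s)$, and since $(s_i'',s_{-i})$ and $s$ are both connected the same strict inequality holds for NCG costs, contradicting that $s$ is an NCG equilibrium. So the equilibrium sets are equal.

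\textbf{Prices of anarchy and stability.} The equilibrium sets agree and every equilibrium of either game is connected, so each equilibrium has the same social cost in both games. For the denominators, one checks against the social-optima theorem (Figure~\ref{fig:Opt}) that the two regions in which $s_\emptyset$ is optimal, namely $2\beta-2<\alpha<2$ and $\alpha\ge\max\{2,\beta n-2(n-1)\}$, are both disjoint from $\{0<\alpha<\beta-1\}$: in the first, $2\beta-2<\alpha<\beta-1$ forces $\beta<1$; in the second, $\alpha\ge 2$ together with $\alpha<\beta-1$ forces $\beta>3$, and then $\alpha\ge\beta n-2(n-1)$ is impossible. Hence in this range the PCG optimum is $s_K$ or $s_Z$, i.e.\ connected, so it has the same social cost as the NCG optimum. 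With identical numerators and identical denominators, the price of anarchy and the price of stability are the same in the PCG as in the NCG. The main obstacle is the disconnecting-deviation case of the reverse inclusion, i.e.\ making precise that a disconnecting deviation is always dominated by a connected one when $\alpha<\beta-1$; the disjointness check for the optimum is routine but should be stated rather than skipped.
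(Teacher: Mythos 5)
Your proof is correct and follows essentially the same route as the paper: the forward direction uses the fact that PCG equilibria are connected (Theorem~\ref{theo:DisNERange}) so the PCG only adds potential deviations, and the reverse direction uses exactly the paper's reconnecting argument, augmenting a disconnecting deviation by direct edges to all separated players at net change $|V\setminus C|(\alpha+1-\beta)<0$. Your extra check that the social optimum is connected (hence of equal cost in both games) in this parameter range is a welcome detail the paper leaves implicit, but it does not change the approach.
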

\begin{proof}
  As in this parameter range all NE of the PCG must be connected,
  every such state must also be a NE in the NCG, because in the PCG
  the players only consider \emph{more} potentially profitable
  deviations. For the converse, suppose for contradiction that there
  is a NE $s$ of the NCG, which is not a NE in the PCG. As $s$ must be
  connected, there is a player $v$, who profits from changing her
  strategy $s_v$ to a strategy $s'_v$ that disconnects the resulting
  graph. Let $W$ be the set of other players, to which $v$ is
  disconnected under $s' = (s'_v,s_{-v})$. Now suppose $v$ changes his
  strategy again to $s''_v$ by building direct connections to all
  players from $W$. As $\alpha + 1 < \beta$, this is again a
  profitable deviation. Thus, player $v$ strictly profits from
  switching from $s_v$ to $s''_v$. Note, however, that in $s'' =
  (s''_v, s_{-v})$ the resulting graph is connected, thus $s''_v$ must
  be a feasible deviation yielding profit for $v$ in the NCG. This
  contradicts the assumption that $s$ is a NE in the NCG. \qed
\end{proof}
In general, however, the price of anarchy for the PCG can be strictly
larger than for the NCG. Figure~\ref{fig:PoA} provides an overview of
the bounds we obtained. Note that all these bounds are in $O(n)$ for
the respective parameter values.
\begin{figure}[ht]
\begin{center}
\setlength{\unitlength}{1mm}
\begin{picture}(105,73)
\put(5,7){\vector(1,0){92}}
\put(7,5){\vector(0,1){62}}
\put(17,5){\line(0,1){12}}
\put(5,12){\line(1,0){2}}
\put(5,17){\line(1,0){2}}
\put(7,7){\line(1,1){60}}
\put(17,12){\line(6,1){80}}
\put(17,12){\line(2,1){80}}
\put(49,28){\line(0,1){21}}
\put(49,5){\line(0,1){2}}
\put(77,5){\line(0,1){37}}
\put(2,6){1}
\put(2,11){2}
\put(2,16){3}
\put(6,1){0}
\put(16,1){2}
\put(46,1){$\Theta(n)$}
\put(70,1){$12n \log n$}
\put(95,3){$\alpha$}
\put(9,64){$\beta$}
\put(11,9){\normalsize $< 2$}
\put(42,10){\normalsize $o(n^\epsilon)$}
\put(48,24){\normalsize $o(n^\epsilon) + O\left(\frac{n\beta}{n + \alpha}\right)$}
\put(87,12){\normalsize 1}
\put(83,32){\normalsize $\Theta\left(\frac{n\beta}{\alpha}\right)$}
\put(63,45){\normalsize $\Theta(n)$}
\put(30,25){\normalsize $\Theta(\beta)$}
\put(12,45){\normalsize same as in NCG}
\put(12,40){\normalsize see Fig.~\ref{fig:NCG}}
\put(64,60){\tiny $\beta -1$}
\put(81,48){\tiny $2\beta -2$}
\put(81,20){\tiny $n\beta - 2(n-1)$}
\end{picture}
\end{center}
\caption{\label{fig:PoA} Price of anarchy in the PCG}
\end{figure}
Our proof is divided into ranges, in which different network
structures are social optima.

\subsection{Star Graph}
At first, we concentrate on the case $\max\{2, \beta-1\} < \alpha <
\beta n - 2(n-1)$, in which disconnected NE can appear and the star is
the social optimum.

We start by observing a helpful reduction to the price of anarchy in
the NCG. Consider a disconnected NE $s$ with non-singleton components
$C_1,\ldots,C_r$ and singleton components
$C_{r+1},\ldots,C_{r+l}$. Let $n_i = |C_i|$ and $c(C_i)$ be the cost
of $C_i$ as a NE in a NCG with $n_i$ players. In particular, $c(C_i)$
counts only edge and shortest path costs within $C_i$ but no
penalties. In addition, let $s_{Z_i}$ be a social optimum state for a
NCG with $n_i$ players. For the given parameter range of $\alpha$ all
the $s_{Z_i}$ are stars.

\begin{lemma}
  \label{lem:compoPoA}
  For the cost of $s$ it holds that
  \[
  \frac{c(s)}{c(s_Z)} \le \frac{n \beta}{\alpha + 2(n-1)} + \max_{1 \le i \le
    r} \left\{\frac{c(C_i)}{c(s_{Z_i})}\right\}~.
  \]
\end{lemma}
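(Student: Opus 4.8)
The plan is to decompose the social cost of $s$ into one internal term per non-singleton component together with the disconnectivity penalties, and to bound each piece against $c(s_Z)$ separately. First I would observe that the decomposition $c(s)=\sum_{i=1}^{r} c(C_i)+\beta\,P$ holds exactly, where $P$ is the number of ordered pairs of vertices lying in different components of $G_s$: since $C_i$ is a connected component, every edge incident to a vertex of $C_i$ stays inside $C_i$ and every shortest path between vertices of $C_i$ stays inside $C_i$, so the contribution of $C_i$ to $c(s)$ is precisely the cost of $C_i$ as a configuration (indeed a NE, as already noted) of the NCG on $n_i$ players; the only remaining contributions are the $\beta$'s across components. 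Because $P=n(n-1)-\sum_{i=1}^{r} n_i(n_i-1)\le n(n-1)$, this yields $c(s)\le \sum_{i=1}^{r} c(C_i)+\beta n(n-1)$.

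Next I would evaluate the denominator. The state $s_Z$ is the center-sponsored star on $n$ vertices, so its edge cost is $\alpha(n-1)$ and its total (ordered) distance cost is $2(n-1)^2$, giving $c(s_Z)=(n-1)\bigl(\alpha+2(n-1)\bigr)$. Dividing the penalty term by $c(s_Z)$ produces exactly $\beta n(n-1)/c(s_Z)=n\beta/(\alpha+2(n-1))$, which is the first summand of the claimed bound. Hence it remains to show $\sum_{i=1}^{r} c(C_i)\le \rho\cdot c(s_Z)$, where $\rho:=\max_{1\le i\le r}c(C_i)/c(s_{Z_i})$.

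By definition of $\rho$ we have $c(C_i)\le \rho\, c(s_{Z_i})$ for each $i$, so it suffices to prove the parameter-free inequality $\sum_{i=1}^{r} c(s_{Z_i})\le c(s_Z)$. In the range under consideration ($\alpha>2$) each optimal sub-configuration $s_{Z_i}$ is a star on $n_i\ge 2$ vertices, so $c(s_{Z_i})=\alpha(n_i-1)+2(n_i-1)^2$; summing over $i$ and using $\sum_i(n_i-1)\le n-1$ (the $n_i$ sum to at most $n$ and $r\ge 1$) together with $\sum_i(n_i-1)^2\le\bigl(\sum_i(n_i-1)\bigr)^2\le (n-1)^2$ (all summands being nonnegative) gives $\sum_i c(s_{Z_i})\le \alpha(n-1)+2(n-1)^2=c(s_Z)$. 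Putting the two parts together proves the lemma; the degenerate case $r=0$, namely $s=s_\emptyset$, is immediate since then $c(s)/c(s_Z)=n\beta/(\alpha+2(n-1))$.

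The only step with real content is the last inequality $\sum_i c(s_{Z_i})\le c(s_Z)$: splitting a vertex set among several disjoint stars (plus possibly isolated vertices) never increases the optimal star cost, which is ultimately the convexity estimate $\sum_i(n_i-1)^2\le (n-1)^2$. A minor point worth checking is that on each block of $n_i$ vertices the star really is the NCG social optimum for the values of $\alpha$ in this regime, so that the closed form for $c(s_{Z_i})$ is valid; this follows from the characterization of social optima discussed above.
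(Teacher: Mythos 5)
Your proof is correct and follows essentially the same route as the paper's: bound the cross-component penalties by $\beta n(n-1)$, which after dividing by $c(s_Z)=(n-1)(\alpha+2(n-1))$ gives the first summand, and bound $\sum_i c(C_i)$ by the maximal ratio $c(C_i)/c(s_{Z_i})$ times $\sum_i c(s_{Z_i})\le c(s_Z)$, the last inequality resting on $\sum_i(n_i-1)\le n-1$ and $\sum_i(n_i-1)^2\le(n-1)^2$. The paper carries out exactly this computation (with a strict final inequality), so no further comment is needed.
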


\begin{proof}
Obviously it holds that
\[
c(s) = 2\beta \left(\sum_{1\le i < j \le r+l} n_i n_j\right) +
\sum_{i=1}^r c(C_i) \le \beta n(n-1) + \sum_{i=1}^r c(C_i).
\]
With $n = l + \sum_{i=1}^r n_i$ we can bound as follows

\begin{eqnarray*}
\frac{\sum_{i=1}^r c(C_i)}{c(s_Z)} & = & \sum_{i=1}^r \left(\frac{c(C_i)c(s_{Z_i})}{c(s_{Z_i})c(s_Z)}\right) 
\le \max_{1 \le i \le r} \left\{\frac{c(C_i)}{c(s_{Z_i})}\right\} \cdot \sum_{i=1}^r\frac{c(s_{Z_i})}{c(s_Z)}\\
& = & \max_{1 \le i \le r} \left\{\frac{c(C_i)}{c(s_{Z_i})}\right\} \cdot \sum_{i=1}^r\frac{(n_i-1)\alpha + 2(n_i - 1)^2}{(n-1)\alpha + 2(n-1)^2}\\
& = & \max_{1 \le i \le r} \left\{\frac{c(C_i)}{c(s_{Z_i})}\right\} \cdot \frac{\alpha \sum_{i=1}^r (n_i-1) + 2\sum_{i=1}^r(n_i - 1)^2}{(n-1)\alpha + 2(n-1)^2}\\
& < & \max_{1 \le i \le r} \left\{\frac{c(C_i)}{c(s_{Z_i})}\right\}~,
\end{eqnarray*}
which proves the lemma. \qed
\end{proof}

\begin{theorem}
  \label{theo:PoA}
  For $2\beta - 2 \le \alpha \le n\beta - 2(n-1)$ the price of anarchy
  is bounded by
  \[ \Theta\left(\frac{n\beta}{\alpha}\right) \mbox{ for } \alpha \ge
  12n \log n, \hspace{1.5cm} O\left(5^{\sqrt{\log n}}\log n + \frac{n
      \beta}{\alpha + n}\right) \mbox{ for } \alpha < 12n \log n~.\]
  For $\beta - 1 \le \alpha \le 2\beta - 2$ the price of anarchy is
  $\Theta(\min\{\beta,n\})$.
\end{theorem}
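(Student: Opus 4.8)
The plan is to treat the two stated parameter regimes separately, leveraging Lemma~\ref{lem:compoPoA} throughout to reduce each disconnected equilibrium to a collection of NCG equilibria plus a penalty term. For the range $2\beta-2 \le \alpha \le n\beta - 2(n-1)$ I would argue as follows. By the preceding theorem on social optima, in this range $s^* = s_Z$, so the price of anarchy is $\max_s c(s)/c(s_Z)$ over NE $s$. For a connected NE, $c(s)/c(s_Z)$ is exactly the price of anarchy of the NCG, which by~\cite{Albers06,Demaine07} is $\Theta\bigl(5^{\sqrt{\log n}}\log n\bigr) = o(n^\epsilon)$ for $\alpha < 12 n\log n$ and is $1$ (trees are optimal) once $\alpha \ge 12 n\log n$. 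For a disconnected NE, apply Lemma~\ref{lem:compoPoA}: the cost ratio is at most $\frac{n\beta}{\alpha + 2(n-1)} + \max_i c(C_i)/c(s_{Z_i})$, and each $C_i$ is a NE of a smaller NCG with $n_i \le n$ players, so its ratio is again bounded by the NCG price of anarchy. This immediately yields the upper bounds $O\bigl(5^{\sqrt{\log n}}\log n + \frac{n\beta}{\alpha+n}\bigr)$ for $\alpha < 12 n\log n$ and $O\bigl(\frac{n\beta}{\alpha}\bigr)$ for $\alpha \ge 12 n\log n$ (here $n\beta/(\alpha+2(n-1)) = \Theta(n\beta/\alpha)$ since $\alpha$ dominates $n$, and the NCG ratio is $1$). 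The matching lower bound $\Omega(n\beta/\alpha)$ for the large-$\alpha$ subrange is the delicate part: I would exhibit an explicit disconnected NE whose components are trees (e.g.\ stars or paths that are NCG equilibria for that $\alpha$) sized so that the quadratic penalty term $2\beta\sum_{i<j} n_i n_j$ is $\Theta(\beta n^2)$ while $c(s_Z) = \Theta(\alpha n)$; one must check both that each component is genuinely a NE of the small NCG and, crucially via Theorem~\ref{theo:DisNERange} ($\alpha \ge \beta-1$ holds here since $\alpha \ge 2\beta-2 \ge \beta-1$ for $\beta\ge 1$), that no player wants to bridge the components.

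For the second range $\beta-1 \le \alpha \le 2\beta-2$, the social optimum is $s_K$ (since $\alpha \le 2\beta-2$ and one checks $\alpha \le 2$ is forced, or more precisely this is the $s_K$ region of Figure~\ref{fig:Opt}), with $c(s_K) = (n-1)(\alpha + 2(n-1)) = \Theta(n^2)$ — wait, here $\alpha = O(\beta)$ but $\beta$ can be large; more carefully $c(s_K) = (n-1)\alpha + 2(n-1)^2$. I claim the worst NE is the empty graph $s_\emptyset$ (which is a NE here by Theorem~\ref{theo:DisNERange}, as $\alpha \ge \beta-1$), with $c(s_\emptyset) = \beta n(n-1)$. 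The ratio is $\frac{\beta n(n-1)}{(n-1)\alpha + 2(n-1)^2} = \frac{\beta n}{\alpha + 2(n-1)}$, which using $\beta-1 \le \alpha \le 2\beta-2$ is $\Theta(\min\{\beta, n\})$: when $\beta \le n$ the denominator is $\Theta(n)$ and the ratio is $\Theta(\beta)$; when $\beta > n$ the denominator is $\Theta(\alpha) = \Theta(\beta)$ and the ratio is $\Theta(n)$. For the upper bound I must show no NE is worse than $s_\emptyset$ by more than a constant: a connected NE has cost $O(\alpha n + n^2) = O(\beta n^2 /n \cdot n)$... I would instead argue that any NE $s$ satisfies $c(s) \le \beta n(n-1)$ up to constants, because each disconnected pair contributes $2\beta$ and each connected pair contributes at most $2\alpha + 2 \le O(\beta)$ per edge amortized — more cleanly, a connected NE of the NCG in this $\alpha$-range has cost $\Theta(\alpha n)$ (it is a tree or small-diameter graph), and $\alpha n \le 2\beta n = O(\beta n)$, versus $c(s_K) = \Omega(\alpha n + n^2) = \Omega(\min\{\beta,n\}\cdot(\alpha + n))$; for disconnected NE apply Lemma~\ref{lem:compoPoA} once more, noting the penalty term $\frac{n\beta}{\alpha + 2(n-1)} = \Theta(\min\{\beta,n\})$ already, and each component ratio $c(C_i)/c(s_{Z_i})$ is the NCG price of anarchy for small $\alpha$, which is $O(1)$ since $\alpha \le 2\beta-2$ need not be small — hmm, $\alpha$ here is not bounded by a constant, so I would instead use the NCG bound $o(n^\epsilon)$, but since the claimed answer is $\Theta(\min\{\beta,n\})$ and $\min\{\beta,n\}\ge \alpha/2 \ge \ldots$ this absorbs into the penalty term only if $\alpha$ is not too small; for small constant $\alpha$ the NCG ratio is $O(1)$ anyway.

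The main obstacle I anticipate is the lower bound constructions — producing, for each relevant $(\alpha,\beta)$ pair, an actual NE achieving the stated asymptotics, and verifying the no-bridging condition and component-level equilibrium conditions simultaneously; by contrast the upper bounds follow fairly mechanically from Lemma~\ref{lem:compoPoA} together with the known NCG price-of-anarchy bounds. A secondary nuisance is bookkeeping the boundary cases where $\beta$ and $n$ are comparable, so that $\min\{\beta,n\}$ and the various $O(\cdot)$ terms line up; I would handle this by splitting into $\beta \le n$ and $\beta > n$ explicitly and checking each closed-form ratio.
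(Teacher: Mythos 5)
Your upper bounds for the range $2\beta-2 \le \alpha \le n\beta-2(n-1)$ follow essentially the paper's route (Lemma~\ref{lem:compoPoA} plus the known NCG price-of-anarchy bounds, plus the ratio $c(s_\emptyset)/c(s_Z)=\beta n/(\alpha+2(n-1))$), but your plan for the matching lower bound $\Omega(n\beta/\alpha)$ when $\alpha \ge 12n\log n$ is broken. You propose exhibiting a disconnected NE whose components are trees (stars or paths); by Theorem~\ref{theo:noTree} no component of a disconnected NE can be a tree once $\alpha>1$ or $\beta>2$, and more strongly Lemma~\ref{lem:nonEmptyNE} (bound (1)) shows that for $\alpha \ge 12n\log n$ no non-empty disconnected NE exists at all, so the ``delicate'' construction you anticipate provably cannot exist. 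The point you are missing is that no construction is needed: $s_\emptyset$ itself is a NE here (since $\alpha\ge 2\beta-2\ge\beta-1$, Theorem~\ref{theo:DisNERange}), and its ratio $\beta n/(\alpha+2(n-1))=\Theta(n\beta/\alpha)$ already matches the upper bound; this is exactly how the paper obtains tightness.

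For $\beta-1\le\alpha\le 2\beta-2$ your argument has two problems. First, the social optimum is not $s_K$ in general ($s_K$ requires $\alpha\le 2$, which is not forced in this range); for $\alpha>2$ it is the star, whose cost $(n-1)\alpha+2(n-1)^2$ is what you actually plug in, so your $\Omega$-direction via $s_\emptyset$ comes out right only through this accidental relabeling. Second, and more seriously, your upper bound does not reach $O(\min\{\beta,n\})$: falling back on the NCG bound $o(n^\epsilon)$ for the components yields only $O(\min\{\beta,n\}+5^{\sqrt{\log n}}\log n)$, which is weaker whenever $\beta$ is small compared to $5^{\sqrt{\log n}}\log n$ (e.g.\ constant $\beta\ge 7$ with $n$ large), and your alternative claim that a connected NE has cost $\Theta(\alpha n)$ is false (distances alone contribute $\Omega(n^2)$). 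The missing idea is the paper's per-pair argument: in any NE a directly connected pair contributes $\alpha+2\le 2\beta$, and an indirectly connected pair contributes $2\,dist_s(v,w)\le 2\sqrt{4\alpha+1}\le 2\sqrt{8\beta-7}\le 2\beta$ for $\beta\ge 7$ (using the NCG diameter bound), so every pair contributes at most the $2\beta$ it contributes in $s_\emptyset$; hence $s_\emptyset$ is the worst NE and the single ratio $\beta n/(\alpha+2(n-1))=\Theta(\min\{\beta,n\})$ characterizes the price of anarchy in this range, covering connected and disconnected equilibria at once.
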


\begin{proof}
  For the proof of the first bound consider $\alpha \ge 12n \log
  n$. According to Lemma~\ref{lem:nonEmptyNE} in this case every NE is
  either connected or $s_\emptyset$. For $\alpha \ge 12n \log n$ all
  connected NE have a constant price of anarchy~\cite{Albers06}, while
  $s_\emptyset$ leads to an increase and proves our first bound:
  \begin{equation}
    \label{eq:emptyCost}
    \frac{c(s_\emptyset)}{c(s_Z)} = \frac{\beta n}{\alpha +
      2(n-1)} \in \Theta\left(\frac{n\beta}{\alpha}\right)~.
  \end{equation}
  This bound increases from $\Theta(1)$ to $\Theta(n)$ if $\alpha$
  drops from $n\beta - 2(n-1)$ to $2\beta - 2$. It also shows that
  the price of anarchy induced by $s_\emptyset$ is never more than
  $O(n)$ for $s^* = s_Z$ and $\alpha \ge \beta - 1$. Another range,
  for which $s_\emptyset$ is the most expensive NE, is $\beta-1 \le
  \alpha \le 2\beta - 2$ with $\beta \ge 7$. Then any directly
  connected pair induces a cost of $\alpha + 2 \le 2\beta$. Any
  indirectly connected pair in a NE induces a cost $2dist_s(v,w) \le
  2\sqrt{4\alpha + 1} \le 2\sqrt{8\beta-7} \le 2\beta$. Thus, the cost
  of $2\beta$ induced by $s_\emptyset$ is maximal for every pair of
  players. Therefore the fraction in Equation~(\ref{eq:emptyCost})
  characterizes the price of anarchy and results in
  $\Theta(\min\{\beta,n\})$, which proves the third bound.

  For the remaining range with $\alpha < 12n \log n$ we cannot exclude
  the possibility that there are worse disconnected NE than
  $s_\emptyset$. However, components of these NE must be connected NE
  of smaller NCGs. Using Lemma~\ref{lem:compoPoA} we can bound the
  price of anarchy for these NE by the sum of the fraction for
  $s_\emptyset$ in Equation~(\ref{eq:emptyCost}) plus the maximum
  factor of any component NE in the corresponding NCG. With the bound
  of $5^{\sqrt{\log n}}\log n \in o(n^\epsilon)$ on the price of
  anarchy for the NCG~\cite{Demaine07} this proves our second bound
  $O\left(5^{\sqrt{\log n}}\log n + \frac{n\beta}{\alpha + n}\right) =
  O(\max\{5^{\sqrt{\log n}}\log n, \min\{n, \beta\}\})$. In
  particular, this represents a bound of $O(n)$ for the price of
  anarchy. \qed
\end{proof}

\subsection{Complete Graph}

In this case we have $s^* = s_K$, and thus it must hold $\beta-1 \le
\alpha \le \min\{2,2\beta-2\}$. The following theorem summarizes the
bounds.

\begin{theorem}
\label{theo:KnPoA}
The price of anarchy is bounded by $4/3$ for $\alpha < 1$, $4/3$ for
$1 \le \alpha \le 2$ and $\beta < 2$, and $3/2$ for $\alpha <
\min\{2,2\beta-2)\}$ and $\beta \ge 2$.
\end{theorem}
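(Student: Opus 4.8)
The plan is to bound the price of anarchy in the regime $\beta-1 \le \alpha \le \min\{2,2\beta-2\}$ where $s^* = s_K$ is the social optimum. The first step is to compute the social cost of the complete graph explicitly: $c(s_K) = \alpha\binom{n}{2} + n(n-1) = \tfrac{n(n-1)}{2}(\alpha + 2)$, since there are $\binom{n}{2}$ edges each paid once and $n(n-1)$ ordered pairs each at distance $1$. Any competing state $s$ has $c(s) = \alpha m_s \cdot \text{(edges)} + \sum_{i\neq j} dist_s(i,j)$, and in this parameter range ($\alpha \le 2$) adding an edge never increases social cost, so the relevant comparison is against connected graphs, and moreover we already know from Theorem~\ref{theo:DisNERange} that for $\alpha < \beta-1$ no disconnected NE exists, while here $\alpha \ge \beta-1$ so the empty graph $s_\emptyset$ can be a NE and must be checked separately.

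For $\alpha < 1$: here I would argue that a NE $s$ must be connected (a disconnected NE would need $\alpha \ge \beta - 1 > \alpha$ unless equality, and one can rule out the boundary case or absorb it), and that among connected graphs a NE with edge count $m_s$ satisfies $c(s) \le \alpha m_s + 2\binom{n}{2}$ since any two players are at distance $\le 2$ (a standard fact for NE in the NCG with $\alpha < 1$: if two vertices were at distance $\ge 3$, one could profitably add the edge). Then $c(s)/c(s_K) \le (\alpha m_s + n(n-1))/(\alpha\binom{n}{2} + n(n-1))$; since $m_s \le \binom{n}{2}$ and the worst case has $\alpha \to 1$, $m_s$ small, this ratio is maximized and bounded by $4/3$ — the $4/3$ comes from comparing the star (or near-empty graph) at $\alpha = 1$ against $s_K$, exactly as in the NCG bound in Figure~\ref{fig:NCG}. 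For $1 \le \alpha \le 2$ with $\beta < 2$: I would similarly show all NE are connected (when $\beta < 2 \le 2$ and $\alpha < \beta - 1$ fails, but here $\alpha \ge \beta-1$, so I must instead directly bound the cost of any disconnected NE against $s_K$ and show it too is within $4/3$ — using that a disconnected pair costs $2\beta < 4$ while a direct edge plus two distance-$1$ terms costs $\alpha + 2 \le 4$), and then invoke the NCG price-of-anarchy bound of $< 4/3$ for $1 \le \alpha \le 2$ on the connected case.

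For the third bound, $\beta \ge 2$ and $\alpha < \min\{2, 2\beta-2\}$: the new feature is that disconnected NE can be genuinely worse. A disconnected pair of players contributes $2\beta$ to social cost rather than the $\le 4$ it would contribute when directly connected (indeed $\alpha + 2 < 4 \le 2\beta$). I would bound the social cost of an arbitrary NE $s$ by replacing every disconnected pair's penalty $2\beta$ with its "true" contribution and argue the total blow-up relative to $c(s_K)$ is at most $3/2$: each pair contributes at most $\max\{2\beta, \alpha+2, 2\cdot 2\}$ and since $\alpha < 2\beta - 2$ gives $2\beta > \alpha + 2$, and $\beta \ge 2$ is bounded here only by $\beta \le (\alpha+2)/2 < 2$... wait, $\alpha < 2\beta - 2$ means $\beta > (\alpha+2)/2$, combined with $\beta \ge 2$; the relevant upper bound on $\beta$ in this region comes from $\alpha < 2$ not directly bounding $\beta$, so I would instead compare pairwise: per-pair cost in $s$ is at most $2\beta$, per-pair cost in $s_K$ is $\alpha + 2 \ge \beta$ (from $\alpha \ge \beta-1$), so the ratio is $\le 2\beta/(\alpha+2) \le 2\beta/\beta = 2$ — which is too weak, so I must be more careful and account for the fact that not all pairs can be disconnected in a NE (components of disconnected NE are NCG-NE of smaller games, with their own constrained structure), pushing the bound down to $3/2$. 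The main obstacle is precisely this third case: extracting the tight $3/2$ requires combining the per-pair penalty accounting with structural constraints on disconnected NE components (that they cannot be trees by Theorem~\ref{theo:noTree} when $\beta > 2$, or are very small when $\beta = 2$), and doing a careful worst-case aggregation rather than a naive pairwise bound.
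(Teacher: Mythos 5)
There is a genuine gap, concentrated in the third case but also affecting the first. The decisive fact you never use is that in the only range where the statement is not already inherited from the NCG (namely $\alpha \ge \beta-1$; for $\alpha < \beta-1$ the NE of the PCG coincide with those of the NCG, so the claimed bounds follow from Figure~\ref{fig:NCG}) one has $\beta \le \alpha+1$, i.e.\ $2\beta \le 2\alpha+2$. With this, the ``naive pairwise bound'' you dismiss as too weak in the case $\beta \ge 2$ in fact closes the argument: in any NE every pair costs at most $2\beta$ (adjacent pairs cost $\alpha+2 \le 2\beta$ since $\alpha \le 2\beta-2$, pairs at distance $2$ cost $4 \le 2\beta$ when $\beta\ge 2$, disconnected pairs cost exactly $2\beta$), so the worst NE is $s_\emptyset$ and the ratio is $2\beta/(\alpha+2) \le (2\alpha+2)/(\alpha+2) < 3/2$ for $\alpha < 2$. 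You instead plugged the lower bound $\alpha+2 \ge \beta+1$ into the denominator, obtained $2$, declared the pairwise accounting insufficient, and proposed (but did not carry out) a structural argument via Theorem~\ref{theo:noTree} and component-wise aggregation. As it stands, the $3/2$ bound is not established in your proposal, and the detour you sketch is unnecessary.

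The first case contains an error as well: for $\alpha<1$ you assert that every NE must be connected, but $s_\emptyset$ is a NE whenever $\alpha \ge \beta-1$ (Theorem~\ref{theo:DisNERange}), which is perfectly compatible with $\alpha<1$ and $s_K$ being the optimum (e.g.\ $\alpha=0.9$, $\beta=1.5$). The constant $4/3$ does not come from ``a star at $\alpha=1$'' --- a star is not even a NE for $\alpha<1$, since distances would exceed $\alpha+1$ --- but again from the empty graph: for $\alpha<1$ all pairs in a NE are adjacent (cost $\alpha+2$) or disconnected (cost $2\beta \ge \alpha+2$), so the worst NE is $s_\emptyset$ and $c(s_\emptyset)/c(s_K) = 2\beta/(\alpha+2) \le (2\alpha+2)/(\alpha+2) < 4/3$. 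Your middle case ($1\le\alpha\le 2$, $\beta<2$) is essentially right in spirit and matches the paper: every pair costs at most $4$ (using diameter at most $2$ in NE components and $2\beta<4$), against $\alpha+2\ge 3$ per pair in $s_K$.
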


\begin{proof}
  Suppose $s$ is a NE and let $C$ be a component. For any two players
  $v,w \in C$ we have $dist_s(v,w) \le \alpha + 1$, because otherwise
  building a direct connection is profitable. Thus, for $\alpha < 1$
  there are no indirectly connected players. A directly connected pair
  of players yields a social cost of $\alpha + 2 \le 2\beta$. Hence,
  the worst NE is $s_\emptyset$ and the price of anarchy is bounded by
  \[
  \frac{c(s_\emptyset)}{c(s_K)} = \frac{2\beta}{\alpha + 2} \le
  \frac{2\alpha + 2}{\alpha + 2} < \frac{4}{3}~.
  \]
  This proves the first bound. For $1 \le \alpha < 2$, the diameter is
  $diam(C) \le 2$ for every non-singleton component of a NE
  $s$. Indirectly connected players yield a social cost of $2 diam(C)
  \le 4$ and directly connected players $\alpha + 2 < 4$. If $\beta <
  2$, then $2\beta < 4$ for any disconnected pair of players. Thus,
  the price of anarchy is bounded by
  \[
  \frac{4(n(n-1)/2)}{c(s_K)} = \frac{4}{\alpha + 2} \le \frac{4}{3}~,
  \]
  which proves the second bound. In case $\beta \ge 2$, we get $2\beta
  \ge 4 \ge \alpha + 2$, so $s_\emptyset$ is the worst NE. For the
  price of anarchy we get the third bound by
  \[
  \frac{c(s_\emptyset)}{c(s_K)} \le \frac{2\alpha + 2}{\alpha + 2} <
  \frac{3}{2}~.
  \]
\qed
\end{proof}
\subsection{Empty Graph}

In this case we have $2\beta - 2 < \alpha < 2$ or $\alpha \ge
\max\{2, \beta n - 2(n-1)\}$. The following theorem summarizes the
bounds.

\begin{theorem}
  The price of anarchy is bounded by $3/2$ for $2\beta - 2 < \alpha <
  1$, $2$ for $1 \le \alpha < 2$ and $\alpha > 2\beta - 2$, and $1$
  for $\alpha \ge 12n \log n$ and $\alpha > \beta n - 2(n-1)$. It is
  $O(5^{\sqrt{\log n}}\log n\cdot\frac{\alpha + n}{n\beta})$ for the
  remaining range.
\end{theorem}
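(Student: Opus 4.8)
The plan is to use, in all four ranges, that the social optimum is $s_\emptyset$, so $c(s^*)=c(s_\emptyset)=\beta n(n-1)$, and to compare the worst NE $s$ to this by splitting the social cost into contributions of unordered vertex pairs: a pair joined by an edge contributes $\alpha+2$, a pair at distance $2$ contributes $4$, a pair at larger distance $d$ contributes $2d$, and a pair in different components contributes $2\beta$. For $2\beta-2<\alpha<1$ the argument in the proof of Theorem~\ref{theo:KnPoA} shows every component of a NE is a clique, so each pair contributes $\alpha+2$ or $2\beta$; as $\alpha>2\beta-2$ gives $\alpha+2>2\beta$, every pair contributes at most $\alpha+2<3$, whence $c(s)\le\frac12 n(n-1)(\alpha+2)$ and $c(s)/c(s^*)\le(\alpha+2)/(2\beta)<3/(2\beta)<3/2$ since $\beta>1$. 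For $1\le\alpha<2$ with $\alpha>2\beta-2$ the same reasoning (every component now has diameter at most $2$) gives each pair a contribution of at most $\max\{\alpha+2,4,2\beta\}=4$, so $c(s)\le 2n(n-1)$ and $c(s)/c(s^*)\le 2/\beta<2$. Both of these are routine pair-counting.

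For $\alpha\ge 12n\log n$ and $\alpha>\beta n-2(n-1)$, the bound ``$1$'' is equivalent to the claim that $s_\emptyset$ is the \emph{only} NE. A non-empty disconnected NE is ruled out by part~(1) of Lemma~\ref{lem:nonEmptyNE}, whose smallest non-singleton component of size $n_l\le n$ would force $\alpha<12n_l\log n_l\le 12n\log n$. A connected NE $s$ is also a NE of the NCG on $n$ players, since an improving NCG move either preserves connectivity (so it is an improving PCG move, impossible) or disconnects $s$ (so it has infinite NCG cost); hence for $\alpha\ge 12n\log n$ it is a tree~\cite{Albers06}, and I derive a contradiction directly. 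Pick a player $v$ paying for a tree edge $e$, and let $T_2$ of size $n-b$ be the piece cut off from $v$ when $v$ deletes $e$. The change in $v$'s cost is $-\alpha+\sum_{w\in T_2}(\beta-dist_s(v,w))$; using $dist_s(v,w)=1+dist_{T_2}(\cdot,w)$ and $\sum_{w\in T_2}dist_{T_2}(\cdot,w)\ge(n-b)-1$ (minimised by a star), this is at most $(\beta-2)(n-b)+1$, which is $<\beta n-2(n-1)<\alpha$ when $\beta\ge 2$ and $<1<\alpha$ when $\beta<2$. In either case $v$ strictly profits, contradicting equilibrium, so no connected NE exists; thus $s_\emptyset=s^*$ is the unique NE and the price of anarchy is $1$.

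The remaining range is $2\le\alpha<12n\log n$ with $\alpha\ge\beta n-2(n-1)$, which forces $\beta n\le\alpha+2(n-1)<12n\log n+2n$, i.e.\ $\beta=O(\log n)$. Here I reuse the decomposition from the proof of Lemma~\ref{lem:compoPoA}: for any NE $s$ with non-singleton components $C_1,\ldots,C_r$ (the connected case being $r=1$ with no singletons), $c(s)=2\beta\sum_{i<j}n_in_j+\sum_{i=1}^r c(C_i)\le\beta n(n-1)+\sum_{i=1}^r c(C_i)$. Each $C_i$ is a connected NE of the NCG on $n_i$ players, so $c(C_i)\le 5^{\sqrt{\log n}}\log n\cdot c^*_{\mathrm{NCG}}(n_i)$ by~\cite{Demaine07}, and since $\alpha\ge 2$ we have $c^*_{\mathrm{NCG}}(k)=(k-1)(\alpha+2(k-1))$; putting $m_i=n_i-1$ and using $\sum_i m_i\le n$, $\sum_i m_i^2\le n^2$ gives $\sum_i c^*_{\mathrm{NCG}}(n_i)=\alpha\sum_i m_i+2\sum_i m_i^2=O(n(\alpha+n))$. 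Dividing $c(s)\le\beta n(n-1)+O(5^{\sqrt{\log n}}\log n\cdot n(\alpha+n))$ by $c(s^*)=\beta n(n-1)$ yields $c(s)/c(s^*)\le 1+O(5^{\sqrt{\log n}}\log n\cdot\frac{\alpha+n}{\beta n})$, and the leading $1$ is absorbed because $\frac{\alpha+n}{\beta n}\ge\frac1\beta=\Omega(1/\log n)$ makes the error term $\Omega(5^{\sqrt{\log n}})=\omega(1)$; this matches $O(5^{\sqrt{\log n}}\log n\cdot\frac{\alpha+n}{n\beta})$. I expect the only genuinely non-mechanical step to be the tree-elimination in the ``$1$'' range — the one place a structural rather than a counting argument is needed — together with the quick check that the additive inter-component penalty $\beta n(n-1)$ is dominated in the last range; the rest reduces to the pair-contribution bookkeeping and the cited NCG price-of-anarchy bounds.
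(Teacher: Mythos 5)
Your proof is correct and takes essentially the same route as the paper: the same case split, the same pair-contribution bounds ($\alpha+2$, $4$, $2\beta$) for the first two ranges, the same exclusion of non-empty NE via Lemma~\ref{lem:nonEmptyNE} together with the tree result of Albers et al.\ for the range with price of anarchy $1$, and the same component decomposition combined with the $5^{\sqrt{\log n}}\log n$ NCG bound for the remaining range. The only cosmetic difference is that you re-derive inline (by a direct tree-edge deletion argument and by summing the star optima directly) what the paper obtains by citing Lemma~\ref{lem:oneOther2} and Lemma~\ref{lem:compoPoA}.
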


\begin{proof}
  In the range $2\beta-2 < \alpha < 1$ every component of a NE
  is a clique. Every directly connected pair of players yields a
  contribution to the social cost of $\alpha + 2 > 2\beta$. For $n >
  2$ we can assign the edges of a complete graph to be purchased by
  the players such that no player can disconnect the graph by removing
  his edges. Hence, the complete graph represents the worst NE and with
  \[
  \frac{c(s_K)}{c(s_\emptyset)} = \frac{\alpha + 2}{2\beta} <
  \frac{3}{2}
  \]
  the first bound follows. 

  In the range $1 \le \alpha < 2$ and $\alpha > 2\beta-2$ every
  component has a diameter of at most 2. A connected pair of players
  yields a social cost of at most $4 > 2\beta$ or $2 + \alpha >
  2\beta$. Therefore the worst NE is connected, has diameter at most
  2, and as $4 > 2 + \alpha$ as few edges as possible. This means no
  NE can be more costly than the star graph (which does not represent
  a NE here). The price of anarchy is bounded by
  \[
  \frac{c(s_Z)}{c(s_\emptyset)} = \frac{\alpha(n-1)+2(n-1)^2}{\beta n(n-1)} < 2~,
  \]
  which proves the second bound.

  Note that for $\alpha \ge 12n \log n$ Lemma~\ref{lem:nonEmptyNE}
  shows that NE can only be connected or
  empty. Lemma~\ref{lem:oneOther2} then shows for $\alpha >
  n\beta - 2(n-1)$ that $s_\emptyset$ is the only NE in this
  range. Hence, we get a price of anarchy of 1.

  For the remaining range of $2 \le \alpha < 12n \log n$ and $\alpha >
  \beta n - 2(n-1)$ we use a bounding argument over the components of
  a disconnected NE. Similar as for the price of anarchy consider a
  disconnected NE $s$ with non-singleton components $C_1,\ldots,C_r$
  and singleton components $C_{r+1},\ldots,C_{r+l}$. Let $n_i = |C_i|$
  and $c(C_i)$ be the cost of $C_i$ as a NE in a NCG with $n_i$
  players. In particular, $c(C_i)$ counts only edge and shortest path
  costs within $C_i$ but no penalties. In addition, let $s_{Z_i}$ be a
  state for a NCG with $n_i$ players representing a star graph. Then
  Lemma~\ref{lem:compoPoA} tells us that
  \[
  \sum_{i=1}^r \frac{c(C_i)}{c(s_Z)} < \max_{1\le i \le r} \left\{ \frac{c(C_i)}{c(s_{Z_i})} \right\}.
  \]
  Similarly to Lemma~\ref{lem:compoPoA} we can bound $c(s) \le \beta
  n(n-1) + \sum_{i=1}^r c(C_i)$, and thus get the third bound
  \begin{eqnarray*}
    \frac{c(s)}{c(s_\emptyset)} & \le & 1 + \frac{c(s_Z)}{c(s_\emptyset)} \sum_{i=1}^r \frac{c(C_i)}{c(s_Z)} < 1 + \frac{c(s_Z)}{c(s_\emptyset)} \max_{1\le i \le r} \left\{ \frac{c(C_i)}{c(s_{Z_i})} \right\} \\
    & \in & O\left(5^{\sqrt{\log n}}\log n \cdot \frac{\alpha + n}{n\beta}\right)~.
  \end{eqnarray*}
  Note that by restriction to $\alpha < 12n \log n$ this bound is still in
  $o(n^\epsilon)$. \qed
\end{proof}

\section{Strong Equilibria}
\label{sect:SPoA}
In this section we assume agents are able to jointly deviate to
different strategies. As stability concept we consider the strong
equilibrium~\cite{Aumann59}, in which no coalition $C$ of players can
decrease the cost for each of its members by taking a joint
deviation. More formally, if a state $s$ is a strong equilibrium (SE),
then for each coalition of players $C$ and each possible strategy
profile $s'_C$ for the players in $C$ it holds that if there is a
player $i \in C$ with $c_i(s'_C,s_{-C}) < c_i(s)$, then there is
another player $j \in C$ with $c_j(s'_C,s_{-C}) \ge c_j(s)$. The price
of anarchy for SE is a straightforward adaption of the price for
NE. It was studied before in~\cite{Andelman07} for the NCG. The
following theorem shows that
with the exception of a small range of parameter values strong
equilibria always exist in the PCG.
%
\begin{theorem}
  \label{theo:SE}
  For $\alpha < \beta - 1$ the SE of the PCG are exactly the SE of the
  NCG. For $\alpha \ge \beta-1$ the social optimum is a SE for all
  parameter values except $\beta < 3$, and $\beta n - 2n + 2 -
  (\beta-1) < \alpha < \beta n - 2n + 2$.
%
\end{theorem}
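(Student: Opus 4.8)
The statement has two parts, and I would treat them separately. For the first part ($\alpha < \beta - 1$), the argument should parallel the earlier theorem that the NE of the PCG and NCG coincide in this range. Since every NE of the PCG is connected here (Theorem~\ref{theo:DisNERange}), and a strong equilibrium is in particular a Nash equilibrium, every SE of the PCG is connected, hence is a state in the NCG; and one checks that a coalitional deviation available in the NCG is also available in the PCG, so being an SE in the NCG implies being an SE in the PCG. The converse direction needs the observation that if a coalition in the PCG could profitably deviate to a disconnected state, it could do at least as well (in fact strictly better, since $\alpha + 1 < \beta$) by having one deviating player additionally buy direct edges to each player it got separated from — thus restoring connectivity — so any profitable PCG coalition deviation can be converted into a profitable NCG coalition deviation. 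This is the same trick used in the NCG/PCG NE equivalence proof, lifted to coalitions.

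\textbf{The main case.} For $\alpha \ge \beta - 1$, I would split along the boundaries from the social-optimum characterization (Figure~\ref{fig:Opt} and the corresponding theorem): in each parameter region the social optimum is one of $s_\emptyset$, $s_K$, or $s_Z$, and in each case I would show directly that no coalition can profitably deviate, \emph{except} in the stated small window. (i) When $s^* = s_\emptyset$: any coalition that deviates creates some edges; a player who ends up paying for an edge, or who is an endpoint of a created edge, must still be reached — but since $s_\emptyset$ is optimal in this range, one shows the cheapest alternative for \emph{any} pair is the $\beta$-penalty, so at least one coalition member is not better off. (ii) When $s^* = s_K$: here $\alpha \le 2$, so in any deviation every pair is still cheaply connected (distance at most $\alpha+1$ is enforced only at equilibrium, but for a \emph{coalition} one argues via social cost) — more carefully, I would use that $c_i$ summed over the coalition is bounded below by its value in $s_K$ because $s_K$ restricted to any vertex subset is locally optimal when $\alpha \le \min\{2, 2\beta-2\}$. (iii) When $s^* = s_Z$ (the star): this is where the exceptional window appears. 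A coalition consisting of the center plus all leaves could deviate to $s_\emptyset$; this is profitable for \emph{all} of them precisely when $\beta n - 2n + 2 > \alpha$ (each leaf pays off its share) and harmless to the center when the bookkeeping works out, i.e. when $\alpha > \beta n - 2n + 2 - (\beta - 1)$ and $\beta < 3$. Outside that window, I would verify that no coalition deviation from $s_Z$ simultaneously helps everyone — the center always wants its $n-1$ edges removed only if the penalty savings outweigh its cost, the leaves want disconnection only if $\beta$ is small, and these cannot both hold outside the window.

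\textbf{The main obstacle.} The delicate part is case (iii): characterizing \emph{exactly} which coalition deviations from the star need to be ruled out. Unlike the NE analysis, a coalition can coordinate arbitrary joint rewirings, so I cannot just check single-player or single-edge moves. The key reduction I would aim for is: in any would-be-profitable coalition deviation from $s_Z$, the resulting graph's non-singleton components must themselves be ``locally optimal'' (else shrink them and improve everyone further, contradicting that the deviation is Pareto-improving), which forces them to be stars; then the analysis collapses to comparing $s_Z$ against unions of smaller stars plus isolated vertices, a one-parameter family where the profitability conditions reduce to the two displayed inequalities. Getting the quantifiers right — ``profitable for every coalition member'' versus ``reduces social cost'' — is where care is needed, since a social-cost improvement need not be Pareto, but here the symmetry of the leaves lets me pass between the two. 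The rest is the routine arithmetic already rehearsed in the social-optimum proof, reused almost verbatim, so I would cite that computation rather than redo it.
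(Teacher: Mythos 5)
Your first part (the $\alpha<\beta-1$ equivalence via reconnecting a disconnecting coalition) matches the paper. The main case, however, has a genuine gap at its heart, namely your case (iii) reduction. The claim that in any profitable coalition deviation from $s_Z$ the non-singleton components may be assumed to be stars does not hold: the existence of a further-improving deviation is no contradiction to the one at hand, and a social-cost improvement of a component cannot in general be converted into a deviation that is profitable for \emph{every} member, because edge payments must be reassigned. Concretely, the paper's own tightness example kills this reduction: for $n=5$ the four leaves of a periphery-sponsored star can deviate to a cycle, and this is profitable for every one of them whenever $\beta<3$ (and $\alpha$ suitable), whereas a deviation to a smaller star is only profitable for all members when roughly $\beta<2$. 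So by restricting attention to unions of stars you would ``prove'' that $s_Z$ is a SE for $2\le\beta<3$ outside the window, which is exactly the regime the $\beta<3$ exception is there to exclude; you have misattributed that exception to the leaves-to-empty deviation, which in fact only produces the window $\beta n-2n+2-(\beta-1)<\alpha<\beta n-2n+2$. The paper's argument is different and per-player: fix the periphery-sponsored star, observe the center never participates in a deviation, dispose of connected deviations by the SE result of~\cite{Andelman07}, and for a deviation creating a new disconnected component exhibit a member who pays for at least half of his incident edges (or a degree-one edge payer), lower-bound his cost by $d_v\alpha/2+d_v+2(n_C-1-d_v)+\beta(n-n_C)$, and derive $\beta<3$ (resp.\ $\beta<2$) as necessary conditions for profitability; mixed deviations are then handled by the same player-level bound.

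Two secondary problems. In your case (ii) you assert that $s_K$ is a SE throughout its optimality range; this is false for $\alpha\in(1,2)$, where by~\cite{Andelman07} no connected SE exists at all, and the paper uses precisely this fact as the other source of the $\beta<3$ exception. (For $\beta\ge3$ and $\alpha\ge\beta-1$ the $s_K$ region degenerates to $\alpha=2$, so the theorem survives, but your argument there is not correct as stated.) In your case (i), the claim that ``the cheapest alternative for any pair is the $\beta$-penalty'' fails whenever $\beta>2$, since an indirectly connected pair at distance $2$ contributes only $4<2\beta$; optimality of $s_\emptyset$ in that regime rests on amortizing edge costs, and the paper instead either identifies a player paying for at least half of his incident edges (when $\beta\le\min\{2,\alpha/2+1\}$) or uses the component cost bound of Equation~(\ref{eq:costLB}) plus an averaging argument to find a coalition member whose cost is at least $\beta(n-1)$. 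Your sketch would need to be repaired along these lines before the quantifier ``some member does not improve'' is actually justified.
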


\begin{proof}
  The first part of the theorem can be proven directly along the lines
  of Theorem~\ref{theo:DisNERange}. Consider a SE $s$ of the NCG and
  suppose there is a profitable deviation of a coalition $C$ that
  creates a disconnected graph. Then reconnecting all players across
  components creates a connected deviation that is cheaper for every
  player. This is a contradiction to $s$ being a SE.

  For the second part, we first note that for the case $\alpha = \beta
  - 1$ we can use the arguments of Theorem~\ref{theo:DisNERange} to
  show that every SE of the NCG is also a SE of the PCG. Hence, using
  results from~\cite{Andelman07} it holds that with the exception of
  $\alpha \in (1,2)$ (respectively $\beta \in (2,3)$) the social
  optimum is a SE. For the remainder we thus focus on the range
  $\alpha > \beta-1$.

  We will at first concentrate on the case, in which $s_\emptyset$ is
  a social optimum. In addition, we assume $\beta \le \min\{2,
  \frac{\alpha}{2}+1\}$ holds. Let us consider a deviating coalition
  $C$ of $n_C$ players that builds a connected component, which is
  disconnected from the remaining $n-n_C$ players. In $C$ there must
  be at least one player $v$ that pays for at least $d_v/2$ (i.e.,
  half of his incident) edges. It requires an easy inductive argument
  to show that if such a player $v$ does not exist, not all edges of
  $C$ are being paid for. For such a player we get a cost of
  \begin{eqnarray*}
    &     & d_v\alpha/2 + d_v + 2(n_C-1-d_v) + \beta(n-n_C) \\
    & \ge & \beta d_v + 2(n_C-1-d_v) + \beta(n-n_C)\\
    & \ge & \beta(n-n_C+d_v) + \beta(n_C-1-d_v) = \beta(n-1).
  \end{eqnarray*}
  Hence, player $v \in C$ is not able to strictly decrease his cost.

  For the remaining range of $\beta > 2$ and $\alpha \ge \beta n -
  2(n-1)$ in which $s_\emptyset$ is optimal, we consider a similar
  argument. Suppose a coalition $C$ of $n_C$ players builds a connected
  network. Then the cost of this network can be lower bounded as in
  Equation~(\ref{eq:costLB}). Hence, the new average player cost in
  $C$ with respect to the coalition is at least $\beta(n_C - 1)$,
  which is exactly the cost of each player in $s_\emptyset$ with
  respect to players in $C$.
  This proves that whenever $s_\emptyset$ is optimal, it is a SE.

  For the case, in which $s_K$ is optimal, it is
  known~\cite{Andelman07} that there is no connected SE for $\alpha
  \in (1,2)$ and $n \ge 7$. If $\alpha \in (1,2)$ and $\alpha/2 + 1 <
  \beta < \alpha + 1$, then $s_K$ is the unique social optimum. This
  means that for $\beta \in (1.5, 3)$ a social optimum might not be a
  SE.

  For the remainder let us consider the range of $\beta \ge 3$. For
  $\alpha \le 2$ the game is equivalent to a NCG, so $s_K$ is a SE for
  $\alpha \in [0,1]$ and $\alpha = 2$. Thus, we concentrate on the
  case $\alpha > 2$ and $\beta < \alpha+1$, in which the star $s_Z$ is
  the only social optimum and is not guaranteed to be a SE by previous
  arguments. Suppose the star is periphery-sponsored, i.e., each leaf
  vertex pays for the incident edge. Then the star center will never
  participate in a deviation: As $\beta > 2$, disconnecting a player
  can only increase the cost for him, and w.r.t. any connected
  component he can never achieve a better cost. Hence, we focus on the
  leaf players. If a coalition $C$ of $n_C$ leaf players chooses to
  deviate, it cannot find a profitable deviation that leaves the
  network connected. This is a result from the fact that the
  periphery-sponsored star is a SE in the NCG~\cite[Theorem
  4.1]{Andelman07}. Hence, let us consider a deviating coalition that
  builds a connected component $C$, which is disconnected from the
  remaining $n-n_C$ players.

  \begin{description}
  \item[Case 1:] First, suppose in $C$ there is a player that pays for
    an edge and has degree 1. For this player a lower bound on his
    cost is given by $\alpha + 1 + 2(n_C-2) + \beta(n-n_C)$. If the
    deviation is profitable for the coalition, we must have
    \[ \alpha + 2n_C - 2 + \beta(n-n_C) < \alpha + 2n - 3, \]
    because otherwise the player would refuse to join. This gives
    $\beta < 2$, which contradicts $\beta \ge 3$. 
    
  \item[Case 2:] Otherwise, suppose each player that pays for an edge
    in $C$ has degree at least 2. We again consider a player $v$, who
    pays for at least $d_v/2$ edges. This player pays for at least
    one edge and has cost at least $d_v\alpha/2 + d_v + 2(n_C -
    1 - d_v) + \beta(n-n_C)$. The player must be motivated to join
    the coalition, and hence his cost must decrease:
    \begin{eqnarray*} 
      d_v\alpha/2 + d_v + 2(n_C - 1 - d_v) + \beta(n-n_C) & < & \alpha + 2n - 3 \\
      d_v\alpha - 2d_v + 4n_C + 2\beta(n-n_C) & < & 2\alpha + 4n - 2 \\
      d_v(\alpha - 2) + 2\beta(n-n_C) & < & 2\alpha + 4(n-n_C)- 2
    \end{eqnarray*} 
    With $d_v \ge 2$ and $n_C \le n-1$ by assumption we have $\beta <
    2 + \frac{1}{n-n_C} \le 3$. 
    This upper bound is again tight. Consider a game with $n=5$
    players, in which the four leaf players deviate to a cycle. This
    deviation can be profitable for any $\beta < 3$ and appropriate
    values of $\alpha$.
  \end{description}

  Now consider deviations of a coalition of leaf players to the empty
  network. This can be profitable if $\beta(n-1) <
  \alpha+2n-3$. Together with the optimality bound of $s_Z$ this
  yields $\beta(n-1) - 2n + 3 < \alpha < \beta n - 2n + 2$, the
  second bound.

  At last, consider deviations in which a part of the coalition $C$
  builds disconnected components, while another part of the coalition
  possibly remains connected to the star of players outside $C$. For
  the remaining range, in which the star $s_Z$ can be a SE, we can use
  the above arguments to show that there must be a player in a newly
  created component that is not able to strictly decrease his
  cost. This shows that $s_Z$ is indeed a SE in the remaining range.
  \qed
\end{proof}

In combination with results from~\cite{Andelman07} the theorem shows
for the case $\alpha < \beta - 1$ that the price of anarchy for SE is
strictly larger than 1, but at most 2. The main result in this section
is a general constant upper bound on the price of anarchy for SE in
the PCG. 

\begin{theorem}
  \label{theo:SPoA}
  The price of anarchy for SE in the PCG is at most 4.
\end{theorem}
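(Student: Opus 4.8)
The plan is to bound $c(s)/c(s^*)$ for an arbitrary strong equilibrium $s$, splitting according to which of $s_\emptyset$, $s_K$, $s_Z$ is the social optimum (Figure~\ref{fig:Opt}) and reusing the structural results already established. Two ranges are immediate: for $\alpha<\beta-1$, Theorem~\ref{theo:SE} identifies the SE of the PCG with those of the NCG, whose strong price of anarchy is at most $2$ by~\cite{Andelman07}; and when $s^*=s_K$, every SE is a Nash equilibrium, so Theorem~\ref{theo:KnPoA} already bounds the ratio by $3/2$. Likewise, whenever $\alpha<2$ the Nash price of anarchy in the relevant regime is already constant and at most $2$ (see the price-of-anarchy theorems of Section~\ref{sect:PoA}). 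So it remains to treat $\alpha\ge\max\{2,\beta-1\}$ with $s^*\in\{s_\emptyset,s_Z\}$.

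For such an $s$, decompose it into non-singleton components $C_1,\dots,C_r$ (with $n_i=|C_i|$) and singletons, so $c(s)=2\beta\sum_{i<j}n_in_j+\sum_{i=1}^r c(C_i)\le\beta n(n-1)+\sum_{i=1}^r c(C_i)$, where $c(C_i)$ is the internal edge-plus-distance cost of $C_i$; note $\beta n(n-1)=c(s_\emptyset)$. Since a joint deviation of the players inside one component of a PCG-SE is also a deviation in the PCG, each $C_i$ is a \emph{strong} equilibrium of the NCG on $n_i$ players, whence $c(C_i)\le 2\cdot\mathrm{OPT}_{\mathrm{NCG}}(n_i)$ by~\cite{Andelman07}, and for $\alpha\ge 2$ this optimum is the star $s_{Z_i}$. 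If $s$ itself is connected it is a strong equilibrium of the NCG on all $n$ vertices, so $c(s)\le 2\cdot\mathrm{OPT}_{\mathrm{NCG}}(n)=2\,c(s_Z)$ when $s^*=s_Z$; when $s^*=s_\emptyset$ and $\alpha$ is large one additionally notes that the expensive connected NCG-equilibria — stars and trees — are \emph{not} PCG-equilibria, because the coalition of players that pay for their own peripheral (resp.\ pendant) edges can profitably split off into singletons once $\alpha$ is large enough that the penalty $\beta(n-1)$ undercuts their current cost. This leaves only cheap connected SE and yields $c(s)\le 4\,c(s_\emptyset)$.

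For disconnected $s$ the additional ingredients are: Theorem~\ref{theo:noTree}, which forbids tree components; Lemma~\ref{lem:nonEmptyNE}, which forces $\beta\le 1+2\,\mathrm{diam}(C_i)\le 1+2\sqrt{4\alpha+1}$ and (via~\cite{Albers06}) $\alpha<12\,n_i\log n_i$, so that $\beta=O(\sqrt\alpha)$ and every non-singleton component is large; the per-component bound $c(C_i)\le 2\,c(s_{Z_i})$ together with the convexity estimate $\sum_i c(s_{Z_i})\le c(s_Z)$ already used in Lemma~\ref{lem:compoPoA}; and coalitional deviations toward the social optimum — for instance the grand coalition deviating to a periphery-sponsored star on all $n$ vertices (whose center has cost $n-1$, a value no player in a disconnected state can beat, so the deviation is blocked only by a leaf whose current cost is small) or to $s_\emptyset$, forcing the bulk of the social cost into a small set of vertices whose cost can then be charged against $c(s^*)$. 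Assembling these so that the penalty contribution is charged against $c(s_\emptyset)$ and the component contributions against $3\,c(s^*)$ gives $c(s)\le 4\,c(s^*)$.

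I expect the delicate step to be the disconnected case when $\beta$ is small but $\alpha$ is a nontrivial fraction of $n\log n$: there $\mathrm{OPT}_{\mathrm{NCG}}(n_i)$ can far exceed $\beta n_i(n_i-1)$, so the bound $c(C_i)\le 2\,\mathrm{OPT}_{\mathrm{NCG}}(n_i)$ is useless on its own, and one must genuinely exploit that $C_i$ sits inside a \emph{strong} equilibrium of the PCG — that neither a sub-coalition of $C_i$ nor a coalition spanning several components can profitably re-route around the large internal distances by paying the finite penalty $\beta$ or by attaching to a common low-distance hub. Getting all the constants in this bookkeeping to add up to exactly $4$, rather than to a larger constant, is the last fiddly point.
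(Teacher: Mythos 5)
Your outline---splitting by which of $s_\emptyset$, $s_K$, $s_Z$ is optimal, dispatching the $s_K$ case via Theorem~\ref{theo:KnPoA} and the $\alpha<\beta-1$ case via the NCG equivalence, then decomposing a disconnected SE into components and invoking coalitional deviations---matches the paper's skeleton, but the two steps that actually produce the constant $4$ are missing, and one of your charging schemes is invalid. First, when $s^*=s_\emptyset$ the per-component bound $c(C_i)\le 2\,c(s_{Z_i})$ from the NCG strong price of anarchy is, as you yourself note, useless (here $\alpha$ can be huge while $c(s_\emptyset)=\beta n(n-1)$ is far below $c(s_Z)$), and your substitute---``stars and trees are not PCG-equilibria, leaving only cheap connected SE,'' plus Theorem~\ref{theo:noTree} and Lemma~\ref{lem:nonEmptyNE}---is not an argument: an arbitrary non-tree, non-star SE component still has to be costed, and the bounds $\alpha<12n_i\log n_i$ and $\beta\le 1+2\,\mathrm{diam}$ give at best logarithmic, not constant, factors. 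The paper closes exactly this hole with two explicit coalitional inequalities valid for \emph{every} non-singleton component $C_i$ of a SE: the coalition of all edge-paying players of $C_i$ deleting all their edges must be blocked, which forces $\alpha+n_i-1\le\beta(n_i-1)$; and the deviation of $C_i$ to an internal star must be blocked, which yields a vertex of distance cost at most $\alpha+2n_i-3$. Combining the two gives $c(C_i)<3\beta n_i(n_i-1)$, i.e., each component costs at most three times its share of $c(s_\emptyset)$, whence the ratio $1+3=4$. Nothing of this quantitative form appears in your write-up.

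Second, when $s^*=s_Z$ and the SE is disconnected, you propose to ``charge the penalty contribution against $c(s_\emptyset)$,'' but in this regime $c(s_\emptyset)$ can exceed $c(s_Z)$ by a factor of $\Theta(n)$ (this is precisely what drives the $\Theta(n)$ Nash price of anarchy in Theorem~\ref{theo:PoA}), so that charge is not permitted. The paper instead uses the deviation you only gesture at---every player additionally builds a spanning star---and the fact that some player must block it, to obtain $\beta(n-n_1)\le\alpha+2(n-1)$ for the largest component $C_1$; a short computation with $r=n-n_1$ then bounds the total penalty term by $2\,c(s_Z)$, which added to the per-component factor $2$ (Lemma~\ref{lem:compoPoA} plus each component being an NCG strong equilibrium) gives $4$. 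So the inequalities $\alpha+n_i-1\le\beta(n_i-1)$ and $\beta(n-n_1)\le\alpha+2(n-1)$ and the bookkeeping around them are the real content of the theorem; your proposal names plausible deviations but never extracts these bounds, and it explicitly leaves the constant unresolved, so as it stands it is a plan rather than a proof.
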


\begin{proof}
  In case the complete graph is the social optimum,
  Theorem~\ref{theo:KnPoA} shows that the price of anarchy is at most
  1.5. As the SE are a subset of the NE of a game, the theorem follows
  for this case.

  We next show the bound for the empty network as optimum. Suppose $s$
  is a non-empty SE, and consider any component $C_i$ of $s$ with $n_i
  = |C_i| > 1$. Each player that pays for at least one edge in $C_i$
  has cost at least $\alpha + n_i - 1$. A joint deviation of this set
  of players would be to delete all edges, which would result in a
  cost of $\beta (n_i-1)$ for each of them w.r.t. the players in
  $C_i$. Hence, it must be that $\alpha + n_i - 1 \le \beta (n_i -
  1)$.

  Note that connection and distance costs within components can be
  bounded as follows. Each non-empty component $C_i$ has at least one
  vertex $v_i$ with distance cost at most $\alpha + 2n_i - 3$, because
  otherwise the whole component could deviate jointly to the star and
  all decrease their cost w.r.t. vertices from $C_i$. Thus, similarly
  to~\cite{Albers06} and~\cite[Lemma 4.1]{Andelman07} we can bound the
  distance and edge costs within $C_i$ by
  \begin{eqnarray*}
  & & (n_i - 1)\left(2\alpha + n_i - 1 + \sum_{v_j \in C_i, v_j \neq v_i} dist(v_i,v_j)\right) \\
  & \le & 2\alpha(n_i - 1) + (n_i - 1)^2 + n_i(\alpha + 2n_i - 3) \\
  & \le & 3n_i \beta(n_i - 1) - 2n_i - 2\alpha + 1 < 3n_i \beta(n_i - 1) 
  \end{eqnarray*}

  In addition, for each vertex in $C_i$ there are penalties of $\beta
  \sum_{j \neq i} n_j$. This yields 
  \[
  \frac{c(s)}{c(s_\emptyset)} \le \frac{\sum_i 3n_i \beta (n_i - 1) +
    \beta n_i \sum_{j \neq i} n_j }{\beta n(n-1)} \le 1 + 3n \max_i
  \frac{ n_i - 1}{n (n-1)} \le 4,
  \]
  and proves the bound for the empty network as social optimum. 

  If the star is the social optimum, then for each connected SE the
  price of anarchy for SE is at most 2. Consider a disconnected SE
  with $k$ components and number the components such that $n_1 \ge n_2
  \ge ... \ge n_k$. We can bound the price of anarchy for SE by the
  maximum factor achieved by any component (see
  Lemma~\ref{lem:compoPoA}) in addition to the costs incurred by the
  penalties. As each component must represent a SE, we have
  \[ \frac{c(s)}{c(s_\emptyset)} \le 2 + \frac{\sum_i n_i \beta \sum_{j
      \neq i} n_j}{\alpha(n-1) + 2(n-1)^2}. \]
  For the remaining part we focus on the penalties. Each player in
  component $C_i$ has penalty exactly $\beta (n-n_i)$. On the other
  hand, if all players join and create an additional star, then his
  new cost for players outside $C_i$ is at most $\alpha + 2(n-n_i)$
  when being a leaf. This yields
  \[ \beta (n-n_1) \le \beta (n-n_i) \le \alpha + 2(n-n_i) \le \alpha
  + 2(n-1).\]
  Therefore, $\alpha \ge \beta(n-n_1) - 2(n-1)$, which allows us
  to bound
  \begin{eqnarray*}
    \frac{\sum_i n_i \beta \sum_{j\neq i} n_j}{\alpha(n-1) + 2(n-1)^2} 
    & \le & \frac{\sum_i\beta n_i (n - n_i)}{\beta(n-1)(n-n_1)} 
  \end{eqnarray*}
  Suppose that $r := n-n_1$, then $n_1 = n-r$ and the number of
  non-connected vertex pairs is $(n-r)r$ for the pairs involving $C_1$
  and at most $r^2-r$ for the remaining components. Hence,
  \begin{eqnarray*}
    \frac{\sum_i n_i \beta \sum_{j\neq i} n_j}{\alpha(n-1) + 2(n-1)^2} 
    & \le & \frac{2\beta ((n-r)r + r^2 - r)}{\beta(n-1)r} = \frac{2\beta (n-1)}{\beta(n-1)} = 2
  \end{eqnarray*}
  which proves the theorem for the star network as social
  optimum. \qed
\end{proof}

\section{Conclusions}
\label{sect:conclude}
In this paper we have extended a model for selfish network creation to
allow for finite penalty values for disconnectivity. Our analysis of
the resulting game and disconnected Nash equilibria brings up a number
of interesting insights. All Nash (strong) equilibria of the NCG can
be Nash (strong) equilibria for the penalized game under sufficiently
large penalty values. Tree structures do almost never appear in
disconnected Nash equilibria. There are cases in which the price of
anarchy is $\Theta(n)$ and thus strictly higher than in the NCG. In
contrast, the strong price of anarchy remains a constant and is at
most 4. However, the increase for the price of anarchy is due to the
existence of the empty network as a Nash equilibrium. Once we can
exclude emptiness of the network, we conjecture that above a constant
threshold for the penalty no disconnected non-empty Nash or strong
equilibrium exists. This would mean that for all non-empty
disconnected Nash equilibria the price of anarchy is similar to the
NCG also bounded by $o(n^\epsilon)$. Proving or disproving this
conjecture remains as an interesting open problem. In addition, it
would be interesting to observe similar phenomena in models with
different edge costs, e.g. given by hierarchical metrics as
in~\cite{Kleinberg08}. In general, deriving a deeper understanding of
the properties and structural characterizations of Nash and strong
equilibria in network creation games like the PCG is an interesting
research direction.

\bibliographystyle{plain}
\bibliography{../../../bibfiles/price,../../../bibfiles/mechanism,../../../bibfiles/steiner,../../../bibfiles/socecon,../../../bibfiles/game,../ref}

\end{document}